\newtheorem{definition}{Definition}
\newtheorem{theorem}{Theorem}
\newtheorem{lemma}{Lemma}
\begin{document}

\title{Censorship of Quantum Resources in Quantum Networks}
\author{Julien Pinske}
\email{julien.pinske@nbi.ku.dk}
\author{Klaus M\o lmer}
\affiliation{Niels Bohr Institute, University of Copenhagen, Blegdamsvej 17, DK-2100 Copenhagen, Denmark}

\date{\today}

    \begin{abstract}
    We may soon see agencies offering public access to quantum communication networks. In such networks it may be a feature that certain resources are available only to priority users or at a higher user fee, as governed by a protective agency overseeing the communication in the network. If the agency wants to restrict the general users to communicate by transmitting states that we categorize as free states of a quantum resource theory (QRT), it may employ resource-destroying (RD) channels that do not affect the free states. Such channels, however, only exist for the simplest of QRTs, putting fundamental limitations on how quantum resources can be regulated. In this work, we go beyond the present limitation by devising a nonlinear censorship protocol which makes use of classical information about the transmitted state. We study the requirement disabling malicious users from breaking the censorship. 
    The protocol can establish an unbreakable censorship of imaginarity and entanglement, while no such censorship can be made for quantum discord and Bell nonlocality.
    \end{abstract}
    
    \maketitle

    \section{Introduction}
    \label{sec:intro}

    Starting from early academic proposals \cite{I76,W83,S94}, quantum information processing matured into an industrialized field of research, where the emergence of commercial start-ups \cite{C18,B21} and (inter)national programs \cite{RZ23} for building quantum computers, can be witnessed.
    This transformation of modern information societies is driven by novel technological applications including quantum sensing \cite{DM10,D17,A23}, quantum communication \cite{NB22,ZS22,LC23}, quantum money \cite{MO23,HG23}, and quantum machine learning \cite{BW17,BO24}.
    The strive for storing, transmitting, and manipulating quantum information, fuels the prospect of a widely accessible quantum internet \cite{K08,SE18,IC22}.

    Contrariwise, in large public-domain networks establishing certain restrictions on the sharing of quantum resources becomes of increasing interest.
    For once, because quantum information can be used break certain cryptographic schemes \cite{MV18,SC21}.
    This enables malicious parties to carry out cryptographic attacks on critical infrastructures, wherever post-quantum cryptography \cite{JF11,BL17,K22} is not at its state-of-the-art.  
    Moreover, experimental progress is made on using commercial telecommunication lines for transmission of quantum information \cite{ZCB21,LBF22,JX23}.
    Network providers might seek to regulate quantum communication in such networks, e.g., the provider offering free classical communication, but demanding premium fees for the transmission of quantum information.

    To prevent the proliferation of quantum resources, such as coherence and entanglement, governmental agencies and commercial providers might establish a form of \textit{quantum censorship} \cite{PS24}.
    In such a protocol, quantum information which is deemed benign crosses a network unaltered while hazardous quantum information is denied transmission.
    To achieve this, a dominant (governmental or commercial) agency applies a resource-destroying (RD) channel \cite{LH17} locally to each sender-receiver connection (Fig. \ref{fig:motiv}). 
    This ensures that only free states of a quantum resource theory (QRT) \cite{CG19,G24} are transmitted. 
    Unlike resource-breaking \cite{HS03,IK13} and resource-annihilating \cite{MZ10,SB17,BB18} operations, RD channels not only destroy the resource but do not alter a free state. 
    Moreover, RD channels are single-shot operations, thus avoiding costly tomography by the agency.
    A key issue of the censorship is that RD channels do not exist for most resources \cite{G17}.
    For instance, there do not exist RD channels for imaginarity \cite{WK21,XG21,HG18}), quantum entanglement \cite{GG07,HH09,SW13,WD17}, quantum discord \cite{OZ01,MB12}, and non-Gaussianity \cite{ESP02,TZ18,LR18}. 
    At first sight, this appears to limit use-cases of quantum censorship severely.

    In this paper, we overcome this limitation by devising a generalized protocol for quantum censorship that uses, what we term, \textit{conditional RD channels}. 
    These operations are conditioned on additional classical information about the free state to be transmitted by the user.
    This enables the dominant authority to deterministically implement (effectively) nonlinear quantum operations, thus allowing for a wider variety of possible RD maps than previously considered \cite{LH17,G17,PS24}.
    In general, specifying a quantum state by classical means leads to an exponential overhead, thus being computationally unfeasible. 
    However, in the censorship protocol only free states are allowed to be transmitted.
    For the latter often a much more compact and efficient classical description can be given.
    The censorship protocol is termed unbreakable if the resource is destroyed, despite the users being uncooperative, e.g., providing an untruthful description of their state. 
    We give several mathematical statements showing under which circumstances the censorship protocol is unbreakable. 
    In particular, the protocol enables an unbreakable censorship of imaginarity and quantum entanglement, both theories for which no (linear) RD channel exists.
    On the other hand, for quantum discord and Bell nonlocality, we find that censorship can be overcome.

    The structure of the article is as follows. 
    In Sec. \ref{sec:QRT}, we start by reviewing some of the basic concepts in QRT that are relevant to quantum censorship.
    There, we also introduce conditional RD channels, the key notion in our theory.
    Section \ref{sec:GQC} devises a protocol for quantum censorship.
    We provide mathematical statements under which the censorship is secure against uncooperative users in the network.
    A study on the influence of noise on the protocol concludes the section. 
    In Sec. \ref{sec:Ex}, the protocol is applied to several quantum resources, namely, imaginarity, entanglement, discord, and nonlocality.
    Finally, Sec. \ref{sec:FIN} is reserved for a summary of the article and concluding remarks.

    \begin{figure}[t]
    \includegraphics[width=.4\textwidth]{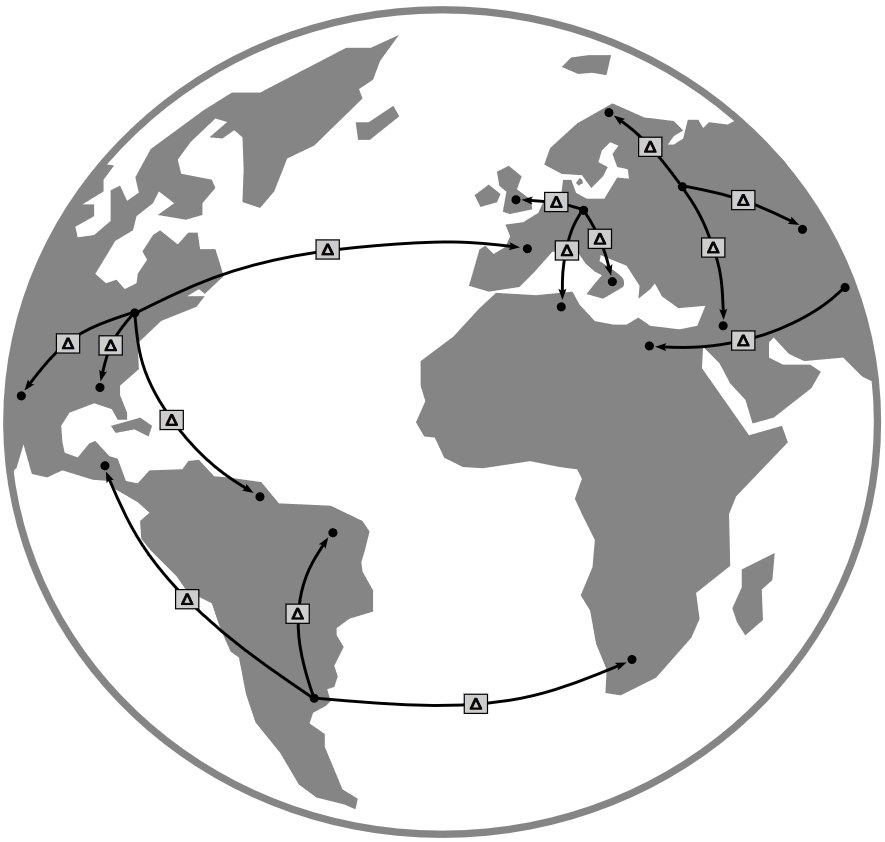}
    \caption{%
        In the quantum censorship protocol, a dominant agency oversees quantum communication in a public-domain network.
        The agency does so by applying a (conditional) RD channel $\Delta$ to a quantum state before sending it to a receiver. 
        The channel destroys any resource present in the state before it reaches the receiver.
    }
    \label{fig:motiv}
    \end{figure}

    \section{Quantum resource theories}
    \label{sec:QRT}

    When trying to censor quantum information, one first has to divide resource states, whose distribution one wants to prevent, from free states, which propagate through the network unchanged.
    Making this distinction is the subject of QRTs \cite{CG19,G24}.
    In a QRT one defines a set of free states $\mathcal{F}(A)$, being a subset of the set of density operators, denoted by $\mathcal{D}(A)$.
    The set $\mathcal{D}(A)$ contains positive(-semidefinite), unit-trace operators $\rho$ acting on the (here, finite-dimensional) Hilbert space $\mathcal{H}_A$ of a system $A$.
    If a state $\rho\notin\mathcal{F}(A)$ is not free, it is said to be a resource.
    There also exist QRTs that consider the resource content of quantum processes \cite{LY20} or quantum invasiveness \cite{MC19}, which we do not study in this work.

    A QRT is said to be convex if for any free state $\sigma^a\in \mathcal{F}(A)$, the convex sum $\sigma=\sum_a t_a\sigma^a$ is again a free state.
    Here, $t_a\geq 0$ and $\sum_a t_a=1$ is a probability distribution.
    The convex hull of an arbitrary set $\mathcal{F}(A)\subseteq\mathcal{D}(A)$ is defined as
    \begin{equation}
        \label{eq:convHull}
        \mathrm{Conv}(\mathcal{F}){=}\Big\{\sum_{a}t_a\sigma^a\,\Big|\,\sigma^a{\in}\mathcal{F}(A),\sum_a t_a{=}1,t_a{\geq}  0\Big\}.
    \end{equation}
    
    Some QRTs satisfy a stronger condition than convexity, that is, they are affine.
    A QRT is said to be affine, if for free states $\sigma^a\in \mathcal{F}(A)$, the state $\sigma=\sum_a t_a\sigma^a$, with $t_a\in\mathbb{R}$ and $\sum_a t_a=1$, is again a free state. 
    For an arbitrary set $\mathcal{F}(A)\subseteq\mathcal{D}(A)$, its affine hull is here defined as 
    \begin{equation}
        \label{eq:affHull}
        \mathrm{Aff}(\mathcal{F}){=}\Big\{\sum_{a}t_a\sigma^a\,\Big|\,\sigma^a{\in}\mathcal{F}(A),\sum_a t_a{=}1\Big\}
        \cap\mathcal D(A).
    \end{equation}
    The intersection with $\mathcal{D}(A)$ ensures that the set $\mathrm{Aff}(\mathcal{F})$ contains states only.
    Note that $\mathrm{Conv}(\mathcal{F})\subseteq \mathrm{Aff}(\mathcal{F})$ holds true, because convex combinations are special cases of affine combinations in which the coefficients $t_a$ are probabilities.

    As an illustrative example, consider the QRT of coherence \cite{BC14,SW18}, in which one quantifies the amount of superpositions with respect to a fixed orthonormal basis $\{\ket{a}\}_a$, the incoherent basis.
    A free (likewise, incoherent) state $\sigma$ admits a diagonal representation in that basis, i.e.,
    $\sigma=\sum_a t_a\ket{a}\bra{a}$. 
    The set of incoherent states $\mathcal{F}(A)$ is affine.
    In particular, for $\sigma$ to be a positive operator, we must have that $t_a\geq 0$.
    Thus, for the case under study the notion of convex and affine coincide.

    \subsection{Tensor-product structures}
    \label{ssec:TPS}

    So far, we only considered a single party $A$.
    As quantum censorship will be employed in large public-domain networks, a formulation of QRTs on compound quantum systems is necessary.
    Let therefore $\mathcal{D}(A_1\dots A_N)$ denote the set of quantum states of an $N$-partite system.
    The convex hull $\mathrm{Conv}\big(\mathcal{D}(A_1)\otimes\dots\otimes\mathcal{D}(A_N)\big)$ corresponds to the set of $N$-partite, fully separable states \cite{W89,H97}.
    More explicitly, an $N$-partite state $\rho$ is said to be separable if it can be written as a convex sum of (pure) product states
    \begin{equation}
        \label{eq:SepState}
        \rho=\sum_b t_b \rho^b_1\otimes\dots\otimes \rho^b_N,
    \end{equation}
    where $t_b\geq 0$ and $\sum_bt_b=1$ is a probability distribution.
    Throughout the paper, we employ the following notation for the tensor products of sets
    \begin{equation}
        \mathcal D(A_1)\otimes\dots\otimes\mathcal D(A_N)=\{\rho_{A_1}\otimes\dots\otimes\rho_{A_n}|\rho_{A_a}\in\mathcal D(A_a)\}.
    \end{equation}

    Given a QRT $\mathcal{F}(A_a)$ on the individual subsystems $a=1,\dots,N$, one has to clarify what constitutes a free state on the composite system.
    The set of $N$-partite free states will be denoted as $\mathcal{F}(A_1\dots A_N)$.
    Most of the physically motivated QRTs admit a tensor-product structure \cite{CG19,LB22}.
    Firstly, this means that the independent preparation of free states $\sigma_{A_1},\dots, \sigma_{A_n}$ gives a free state on the composite system, i.e., $\sigma_{A_1}\otimes\dots \otimes \sigma_{A_N}\in\mathcal{F}(A_1\dots A_N)$.
    Equivalently, we can write
    \begin{equation}
        \mathcal{F}(A_1)\otimes\dots \otimes\mathcal{F}(A_N)\subseteq\mathcal{F}(A_1\dots A_N).
    \end{equation}
    Secondly, discarding subsystems does not create a resource;
    i.e., for $\sigma\in\mathcal{F}(A_1\dots A_N)$, its marginals 
    \begin{equation}
        \mathrm{Tr}_{a}(\sigma)\in \mathcal{F}(A_1\dots A_{a-1}A_{a+1}\dots A_N)
    \end{equation}
    for $a=1,\dots,N$, are free, too.
    
    Therefore, if $\mathcal{F}(A_1),\dots,\mathcal{F}(A_N)$ are convex, then we define
    \begin{equation}
        \label{eq:ConvFree}
        \mathrm{Conv}\big(\mathcal{F}(A_1)\otimes\dots\otimes\mathcal{F}(A_N)\big)\subseteq\mathcal{F}(A_1\dots A_N).
    \end{equation}
    Analogously, if $\mathcal{F}(A_1),\dots,\mathcal{F}(A_N)$ are affine, then we define
    \begin{equation}
        \label{eq:AffFree}
        \mathrm{Aff}\big(\mathcal{F}(A_1)\otimes\dots\otimes\mathcal{F}(A_N)\big)\subseteq \mathcal{F}(A_1\dots A_N).
    \end{equation}
    Unless explicitly stated otherwise, the resources to be considered are assumed to admit a tensor-product structure.

    \subsection{Resource-destroying maps and channels}
    
    Here we formally define RD maps \cite{LH17}, a key notion in quantum censorship.
    A map $\Delta:\mathcal{D}(A)\to \mathcal{D}(B)$ is said to be RD, if it satisfies
    \begin{align}
        \text{(i) }\forall\rho\in\mathcal D(A):& \quad\Delta(\rho)\in\mathcal F(B),
        \tag{resource-destroying}
        \\
        \text{(ii) }\forall\sigma\in\mathcal F(A):& \quad\Delta(\sigma)=\sigma.
        \tag{freeness-preserving}
    \end{align}
    Simply speaking, an RD map outputs a free state, if the input was a resource, and leaves the input unchanged if it was a free state.
    In general, an RD map can be highly nonlinear and does not need to be physically implementable. 
    For condition (ii) to be well-defined, the systems $A$ and $B$ are assumed to be isomorphic, i.e., $\mathcal{H}_A$ and $\mathcal{H}_B$ have the same dimension.
    
    In the standard picture of quantum information processing, physical operations are mathematically expressed as quantum channels \cite{W18}.
    A quantum channel is a linear map $\Lambda:\mathcal{D}(A)\to\mathcal{D}(B)$ that is completely positive, i.e., $\mathrm{id}_C\otimes\Lambda$ maps positive operators onto positive operators.
    Here, $C$ is a reference system of arbitrary size and $\mathrm{id}_C$ denotes the identity.
    If an RD map $\Delta$ is a quantum channel, we refer to it as an RD channel.
    
    For a simple example of an RD channel, consider again the QRT of coherence.
    The (unique) RD channel for this theory is given by a dephasing in the incoherent basis, i.e.,
    \begin{equation}
        \label{eq:RDcoh}
        \Delta(\rho)=\sum_{a}\ket{a}\bra{a}\rho\ket{a}\bra{a}.
    \end{equation}
    The channel $\Delta$ outputs an incoherent state for any $\rho\in\mathcal{D}(A)$.
    It is readily verified that $\Delta$ leaves incoherent states unaltered.
    Here, the set of $N$-partite incoherent states $\mathcal{F}(A_1\dots A_N)=\mathrm{Aff}(\mathcal{F}(A_1)\otimes\dots\otimes \mathcal{F}(A_N))$ is exactly the set of states which is left unaltered by $\Delta^{\otimes N}$ satisfying Eq. \eqref{eq:AffFree}.
    
    A simple physical implementation of the channel \eqref{eq:RDcoh} can be envisaged in terms of polarization optics.
    Here, the incoherent basis is defined by the horizontal and vertical polarization states $\ket{H}$ and $\ket{V}$, respectively. 
    Let $\ket{\psi_A}=\alpha_H\ket{H}+\alpha_V\ket{V}$ be a resource state prepared by a sender $A$.
    The channel $\Delta$ now amounts to applying a polarization filter corresponding to the projective measurement $\ket{H}\bra{H}$ or $\ket{V}\bra{V}$, yielding the state $\ket{H}$ and $\ket{V}$ with probability $|\alpha_H|^2$ and $|\alpha_V|^2$, respectively. 
    As long as it is hidden from a receiver $B$, which filter was used, $B$'s best description of the state is $\sigma_B=|\alpha_H|^2\ket{H}\bra{H}+|\alpha_V|^2\ket{V}\bra{V}$.
    The resource of coherence has been destroyed \cite{PS24}.
 
    While RD channels can be given for the QRTs of coherence \cite{BC14,SW18}, reference frames \cite{BR07,GS08}, and athermality \cite{BH15,SR20}, they do not exist for all resources. 
    For instance, for an RD channel to exist, the set of free states $\mathcal{F}(A)$ must be affine \cite{LH17,G17}.

    To see this, suppose $\mathcal{F}(A)$ is nonaffine.
    Then, there exists an affine combination $\sigma=\sum_a t_a\sigma^a\notin\mathcal{F}(A)$, with $\sigma^a\in\mathcal{F}(A)$ being free. 
    If an RD channel $\Delta$ would exist for a nonaffine theory, then $\Delta\big(\sum_a t_a\sigma^a\big)=\sum_a t_a\Delta(\sigma^a)$ has to be a free state by condition (i).
    On the other hand, condition (ii) implies $\Delta(\sigma^a)=\sigma^a$, thus $\Delta(\sigma)=\sigma$.
    Since, $\sigma$ is not a free state by assumption, we arrive at a contradiction, showing that the channel $\Delta$ does not exist.

    \subsection{Conditional resource-destroying channels}
    
    The above result sets a fundamental limitation on which resources can be destroyed.
    Nevertheless, for a nonaffine QRT, we can introduce a generalization of an RD channel by allowing for a conditioning on the input state. 
    Such a preprocessing (or postprocessing) is a common practice in the field of quantum information to emulate nonlinear dynamics, see, e.g., Refs. \cite{S03,PM17,BB18}.
    In practice, the conditioning will be realized by providing a classical message $m_\sigma$ that specifies (not necessarily uniquely) the free state $\sigma\in\mathcal{F}(A)$. 
    We call such a message $m_\sigma$ a \textit{(state) description} of $\sigma$.
    
    For example, an incoherent state $\sigma$ acting on a $d$-dimensional subspace $\mathcal{H}_A$ is uniquely specified by a probability distribution, i.e., $m_{\sigma}=(p_1,\dots,p_{d-1})$, where $p_d=1-\sum_{a=1}^{d-1}p_a$ does not need to be included in the description. 
    Already, we can observe that free states can often be specified much more compactly, than a general state $\rho\in\mathcal{D}(A)$, for which $d^2-1$ real parameters must be given.
    
    In a quantum-mechanical framework, a description is associated with a pure state $\ket{m_\sigma}$ in an ancilla space $\mathcal{H}_{M}$, which we denote as the \textit{description space} throughout.
    One can imagine the classical information being encoded to (some degree of accuracy) by a string of binaries, e.g., $\ket{m_\sigma}=\ket{010\dots1}$.
    Then, different descriptions can be distinguished by projective measurement, owing to their orthogonality $\braket{m_\sigma|m_{\sigma^\prime}}=\delta_{\sigma \sigma^\prime}$.
    
    Throughout the paper, we will make frequent use of the fact that any $\rho\in\mathcal{D}(MA)$ can be expanded in terms of an orthonormal basis $\ket{m}$ of the description space $\mathcal{H}_M$, viz. 
    \begin{equation}
        \label{eq:exp}
        \rho_{MA}=\sum_{m,n}t_{mn}\ket{m}\bra{n}_M\otimes \rho^{mn}_A.
    \end{equation}
    Since, $\rho_{MA}$ is a positive unit-trace operator, it is necessary that the coefficients satisfy $t_{mm}\geq 0$, $\sum_{m}t_{mm}=1$, and $\rho^{mm}\in\mathcal{D}(A)$ must be quantum states.
    Note that a general state $\rho_{MA}$ in Eq. \eqref{eq:exp} can also include superpositions of message states $\ket{m}$.  
    
    With this notation established we introduce the notion of a \textit{conditional RD channel}.
    \begin{definition}
        \label{def:CRD}
        A quantum channel $\Delta:\mathcal{D}(MA)\to\mathcal{D}(B)$ is said to be a conditional RD channel, if it satisfies
    \begin{align*}
           \mathrm{(iii)}\quad\forall\rho\in\mathcal D(MA):&\quad\Delta(\rho)\in\mathcal F(B),\\
            \mathrm{(iv)}\quad\forall\sigma\in\mathcal F(A):&\quad\Delta(\ket{m_\sigma}\bra{m_\sigma}\otimes\sigma)=\sigma.
    \end{align*} 
    \end{definition}
    The channel $\Delta$ destroys any resource, independent of the input, and preserves a free state $\sigma$, when given its description $m_\sigma$.
    To censor quantum information using conditional RD channels is the key purpose of this paper.  

    Conditional RD channels have a useful representation in terms of a projective measurement on $M$, viz.,
    \begin{equation}
        \label{eq:CRD-Rep}
        \Delta=\sum_{m_\sigma}\braket{m_\sigma|(\,\cdot\,)|m_\sigma}\otimes \Delta_{m_{\sigma}}.
    \end{equation}
    In order for $\Delta$ in Eq. \eqref{eq:CRD-Rep} to be a conditional RD channel, we must have that
    \begin{align*}
        \text{(v)}\quad\forall\rho\in\mathcal D(A):& \quad\Delta_{m_\sigma}(\rho)\in\mathcal{F}(B),
        \\
        \text{(vi)}\quad\forall\sigma\in\mathcal F(A):& \quad\Delta_{m_{\sigma}}(\sigma)=\sigma.
    \end{align*}
    In Appendix \ref{app:rep}, we show that Eq. \eqref{eq:CRD-Rep} is not only a sufficient but also a necessary condition for $\Delta$ to be a conditional RD channel.
    Simply speaking, the channel $\Delta_{m_\sigma}$ destroy any resource state, independent of the given description, but preserves a free state $\sigma$, when given the (correct) description $m_\sigma$.
    As an information processing protocol $\Delta$ realizes the circuit
    \begin{equation*}
        \Qcircuit @C=1em @R=.7em {
            \lstick{m}& \qw & \meter & \cctrlo{1} & \rstick{}\\
            \lstick{A}& \qw & \qw & \gate{\Delta_m} & \qw & \rstick{B.}\qw\\ 
        }
    \end{equation*}
    Note that a conditional RD channel does only exist for convex QRTs, as given in the following theorem.
    \begin{theorem}
        \label{th:CRD}
        Let $\mathcal{F}$ be nonconvex.
        Then, there does not exist a conditional RD channel $\Delta$.
    \end{theorem}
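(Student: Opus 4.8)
The plan is to run the same style of argument that the excerpt used to show an ordinary RD channel forces $\mathcal{F}(A)$ to be affine, but now with convex combinations in place of affine ones, exploiting property (iv) of a conditional RD channel together with the fact that a quantum channel is a \emph{linear} map on operators. So I would suppose, for contradiction, that $\mathcal{F}(A)$ is nonconvex and that a conditional RD channel $\Delta:\mathcal{D}(MA)\to\mathcal{D}(B)$ nonetheless exists.

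Nonconvexity means (cf.\ Eq.~\eqref{eq:convHull}) that there are free states $\sigma^a\in\mathcal{F}(A)$ and a probability distribution $\{t_a\}$, with $t_a\geq 0$ and $\sum_a t_a=1$, whose convex combination $\sigma=\sum_a t_a\sigma^a$ fails to be free, $\sigma\notin\mathcal{F}(A)$. For each $a$ I would fix a description $m_{\sigma^a}$ of the free state $\sigma^a$, with orthonormal message state $\ket{m_{\sigma^a}}\in\mathcal{H}_M$, and form the classical--quantum state
\begin{equation*}
    \rho_{MA}=\sum_a t_a\,\ket{m_{\sigma^a}}\bra{m_{\sigma^a}}\otimes\sigma^a\in\mathcal{D}(MA),
\end{equation*}
which is a bona fide density operator, being a convex mixture of the product states $\ket{m_{\sigma^a}}\bra{m_{\sigma^a}}\otimes\sigma^a$. (Should two of the chosen descriptions coincide, the corresponding terms simply merge, with no effect on the argument.) Here the classical register is essential: since the weights $t_a$ are genuine probabilities, $\rho_{MA}$ is an honest input to $\Delta$, which is precisely why the theorem hypothesizes failure of convexity rather than of affineness.

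Applying $\Delta$ and invoking linearity together with property (iv) gives
\begin{equation*}
    \Delta(\rho_{MA})=\sum_a t_a\,\Delta\!\big(\ket{m_{\sigma^a}}\bra{m_{\sigma^a}}\otimes\sigma^a\big)=\sum_a t_a\,\sigma^a=\sigma,
\end{equation*}
whereas property (iii) forces $\Delta(\rho_{MA})\in\mathcal{F}(B)$. Under the identification of $\mathcal{H}_A$ with the relevant subspace of $\mathcal{H}_B$ already implicit in condition (iv) --- namely that $\Delta(\ket{m_\sigma}\bra{m_\sigma}\otimes\sigma)=\sigma$ makes sense, so that $\mathcal{F}(B)$ restricted to $\mathcal{D}(A)$ is $\mathcal{F}(A)$ --- this reads $\sigma\in\mathcal{F}(A)$, contradicting the choice of $\sigma$. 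Hence no conditional RD channel exists. I do not expect a deep obstacle here; the only points demanding care are packaging the nonconvexity witness into a legitimate state on $MA$ and the bookkeeping between the input system $MA$ and the output system $B$ in conditions (iii)--(iv). Conceptually, the theorem says that the classical side information relaxes the affineness requirement of ordinary RD channels only down to convexity, and not a step further.
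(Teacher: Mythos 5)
Your proposal is correct and follows essentially the same route as the paper: the same classical--quantum witness state $\sum_a t_a\ket{m_{\sigma^a}}\bra{m_{\sigma^a}}\otimes\sigma^a$ built from a nonconvexity witness, mapped by $\Delta$ to the non-free convex combination $\sum_a t_a\sigma^a$, contradicting condition (iii). The only cosmetic difference is that you invoke linearity of the channel together with condition (iv) directly, whereas the paper routes the computation through the measurement representation of Eq.~\eqref{eq:CRD-Rep} and condition (vi); both yield the identical contradiction.
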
    
    \begin{proof}
        Consider the state
        \begin{equation}
            \rho_{MA}=\sum_{a}p_a\ket{m_{\sigma^a}}\bra{m_{\sigma^a}}\otimes \sigma^a,
        \end{equation}
        with $\sigma^a\in\mathcal{F}(A)$ being free states and the convex sum $\sum_a p_a\sigma^a\notin\mathcal{F}(A)$ being a resource.
        Such a probabilistic mixture must exist, because $\mathcal{F}(A)$ is nonconvex.
        Now, suppose there exists a conditional RD channel $\Delta$ for this QRT.
        From the representation in Eq. \eqref{eq:CRD-Rep} it follows that
        \begin{equation}
            \begin{split}
                \Delta(\rho)&=\sum_{a,b} p_a|\braket{m_{\sigma^a}|m_{\sigma^b}}|^2\Delta_{m_{\sigma^b}}(\sigma^a),\\
                &=\sum_a p_a\sigma^a,
            \end{split}
        \end{equation}
        where we made use of $\braket{m_{\sigma^a}|m_{\sigma^b}}=\delta_{ab}$ and $\Delta_{m_{\sigma^a}}(\sigma^a)=\sigma^a$ by condition (vi).
        Then, $\Delta(\rho)=\sum_a p_a\sigma^a\notin\mathcal{F}(B)$ is a resource state by initial assumption.
        This shows that $\Delta$ fails to satisfy (iii), thus leaving us with a contradiction.
    \end{proof}

    Intuitively, this failure of destroying nonconvex resources follows from the projective measurement on the description space $M$ introducing additional randomness into the system $A$.
    Since randomness constitutes a resource in nonconvex theories, this resource is contained in the output state.
    Note that conditional RD channels give us a more general way of destroying resources than regular RD channels.
    The former apply not only to affine QRTs, but also for convex QRTs, such as entanglement.

    \subsubsection{Examples of conditional RD channels}
    
    Any RD channel is a conditional RD channel for which the description space $\mathcal{H}_M$ is one-dimensional.
    In other words, the channel acts independent of any given description.
    For example, the dephasing channel \eqref{eq:RDcoh} preserves incoherent states without a user specifying which incoherent state was send.

    The simplest example of a conditional RD channel
    is given by choosing $\Delta_{m_\sigma}$ in Eq. \eqref{eq:CRD-Rep} as a replacement channel
    \begin{equation}
        \label{eq:repl}
        \Delta_{m_\sigma}(\rho)=\mathrm{Tr}(\rho)\sigma.
    \end{equation}
    It is not hard to see, that \eqref{eq:repl} fulfills the relations (v) and (vi). 
    For the replacement channel \eqref{eq:repl}, a $d\times d$ density operator $\sigma$ must be specified uniquely in the description $m_\sigma$.
    In general, this requires $d^2-1$ real numbers to be contained in the description $m_\sigma$.
    In many specific situations, the number of parameters might actually be much lower, because free states are usually characterized by fewer parameters than a general quantum state.
    Generally speaking, it is desirable to search for compact state descriptions.
    This will have the effect that the orthogonality relation is modified to $\braket{m_\sigma|m_{\sigma^\prime}}=\delta_{[\sigma] [\sigma^\prime]}$, where $[\sigma]$ is some equivalence class of free states all having the same description $m_\sigma$.

    A more intriguing example of a conditional RD channel is a dephasing with respect to a rotated basis, which is specified in the description. 
    More specifically, let
    \begin{equation}
        \label{eq:deph}
        \Delta_{m_\sigma}(\rho)=\sum_{a}\ket{\sigma^a}\bra{\sigma^a}\rho\ket{\sigma^a}\bra{\sigma^a},
    \end{equation}
    where $\{\ket{\sigma^a}\}_a$ are the eigenvectors of the state $\sigma$ given in the description $m_\sigma$. 
    The channel $\Delta_{m_\sigma}$ satisfies (v) and (vi) only if the eigenstates $\ket{\sigma^a}$ are free as well, i.e., $\ket{\sigma^a}\bra{\sigma^a}\in\mathcal{F}(A)$.
    Two prominent resources, for whom this is the case, are coherence \cite{BC14} and imaginarity \cite{WK21} (see Sec. \ref{sec:Ex}), while it is not the case for entanglement \cite{HH09}.
    There, a free (i.e., separable) state can have eigenstates which are resourceful (i.e., entangled).

    Let $\rho=\sum_{b}\lambda_b\ket{\phi^b}\bra{\phi^b}$ be the spectral resolution of an arbitrary state and $m_\sigma$ be the description of a free state $\sigma\in\mathcal{F}(A)$. 
    Then we have
    \begin{equation}
        \label{eq:CheckRD}
        \begin{split}
            \Delta_{m_\sigma}(\rho)&=\sum_{a}\ket{\sigma^a}\bra{\sigma^a}\rho\ket{\sigma^a}\bra{\sigma^a},\\
            &=\sum_{a,b}\lambda_b \vert\braket{\sigma^a|\phi^b}\vert^2\ket{\sigma^a}\bra{\sigma^a},\\
            &=\sum_{a}\nu_a\ket{\sigma^a}\bra{\sigma^a},
        \end{split}
    \end{equation}
    where we introduced probabilities $\nu_a=\sum_b \lambda_b \vert\braket{\sigma^a|\phi^b}\vert^2$.
    The last line of Eq. \eqref{eq:CheckRD} is a free state, because it is to a convex sum of the free states $\ket{\sigma^a}\bra{\sigma^a}$.
    Hence, $\Delta_{m_\sigma}(\rho)\in\mathcal{F}(B)$ is free and condition (v) is satisfied.
    In particular, let $\rho=\sigma$ be the free state whose description $m_\sigma$ was given.
    Then, $\ket{\phi^b}=\ket{\sigma^b}$ and we have
    $\braket{\sigma^a|\phi^b}=\braket{\sigma^a|\sigma^b}=\delta_{ab}$ in Eq. \eqref{eq:CheckRD}.
    This yields
    \begin{equation}
        \begin{split}
            \Delta_{m_\sigma}(\sigma)&=\sum_{a,b}\lambda_b \delta_{ab}\ket{\sigma^a}\bra{\sigma^a},\\
            &=\sigma.\\
        \end{split}
    \end{equation}
    Thus, condition (vi) is fulfilled as well.
    
    For the replacement channel \eqref{eq:repl} the description was unique, i.e., $[\sigma]=\sigma$. 
    For the the channel \eqref{eq:deph}, we have an equivalence class of states with the same eigenvectors, because in Eq. \eqref{eq:deph} no information about the eigenvalues of $\sigma$ is required.
    The equivalence class reads
    \begin{equation}
        [\sigma]=\big\{\sigma^\prime\in\mathcal{F}(A)\,\big|\,[\sigma,\sigma^\prime]=0\big\}.
    \end{equation}
    Here, we used that the commutator $[\sigma,\sigma^\prime]=\sigma\sigma^\prime-\sigma^\prime\sigma$ vanishes if and only if $\sigma$ and $\sigma^\prime$ have the same eigenvectors.

    \section{Conditional quantum censorship}
    \label{sec:GQC}
    In the following, a protocol for quantum censorship is devised. 
    The protocol utilizes conditional RD channels to establish censorship in a network. 
    Firstly, the protocol is studied for noiseless channels. 
    In particular, we discuss under which circumstances the censorship cannot be overcome by uncooperative users in the network.  
    Finally, the effect of noise on the protocol is investigated. 

    \subsection{Censorship over noiseless channels}
    
    Consider $N$ senders $A_1,\dots,A_N$ who have access to local quantum resources, e.g., party $A_a$ can prepare any state $\rho_{A_a}\in \mathcal{D}(A_a)$. 
    In an unregulated network, each sender is connected to one of the receivers $B_1,\dots,B_N$ via the noiseless channel $\mathrm{id}_{A_a\to B_a}$.
    However, in order to prevent the transmission of resource states, an agent sits in between each sender-receiver pair.
    The agent's goal is to limit the type of quantum states that can be shared between parties to the free states $\mathcal{F}(A_a)$ of a QRT.
    The agent informs each sender that upon transmission of their state $\sigma$ a truthful state description $m_\sigma$ is to be send as well.
    This establishes the user agreement of the public-domain network.
    To enforce that policy, the agent applies a conditional RD channel $\Delta$.
    Thus, the general information processing protocol of (noiseless) quantum censorship is 
    \begin{equation*}
        \Qcircuit @C=1em @R=.7em {
            \lstick{}& \qw & \meter & \cctrlo{1} & \rstick{}\\
            \lstick{}& \qw & \qw & \gate{\Delta_m} & \qw & \rstick{B_1}\qw\inputgroup{1}{2}{1.1em}{A_1}\\ 
            \lstick{\vdots} & &\vdots & & \vdots\\
    		\lstick{} & & & & \\
            \lstick{}& \qw & \meter & \cctrlo{1} & \rstick{}\\
            \lstick{}& \qw & \qw & \gate{\Delta_m} & \qw & \rstick{B_N.}\qw\inputgroup{5}{6}{1.1em}{A_N}\\ 
        }
    \end{equation*}
    
    Suppose, each sender tries to transmit a resource state, i.e., the composite system is in the product state 
    \begin{equation}
        \rho^1\otimes\dots\otimes\rho^N\notin\mathcal{F}(A_1)\otimes\dots \otimes\mathcal{F}(A_N).
    \end{equation}
    They then give some (untruthful) state description $m_{\sigma^1},\dots,m_{\sigma^N}$ to be transmitted together with their states.
    After censorship, the receiving parties obtain [see (v)]
    \begin{equation}
        \Delta_{m_{\sigma^1}}(\rho^1)\otimes\dots\otimes \Delta_{m_{\sigma^N}}(\rho^N)\in\mathcal{F}(B_1)\otimes\dots \otimes\mathcal{F}(B_N).
    \end{equation}
    The result is a free state, independent of whether the senders gave truthful state descriptions or not.
    As intended by the agent, any potential resource in the network has been destroyed before reaching one of the receivers $B_1,\dots,B_N$.
    On the other hand, if the initial state $\sigma^1\otimes\dots\otimes\sigma^N$ is free, it remains unchanged by the censorship [see (vi)]
    \begin{equation}
        \Delta_{m_{\sigma^1}}(\sigma^1)\otimes \dots\otimes \Delta_{m_{\sigma^N}}(\sigma^N)=\sigma^1\otimes\dots\otimes\sigma^N,
    \end{equation}
    when given the correct descriptions $m_{\sigma^1},\dots,m_{\sigma^N}$.
    This allows the users of the network to carry out (uninterrupted) communication only with free states.

    Most protocols used for quantum-information processing that involve nonlinear channels are not scalable, because they demand for exponential classical resources.
    This can be due to a heralding process having low success probability \cite{KLM01} or by assuming classical information about a quantum state \cite{HC23}, which scales exponentially with the number of qudits (likewise the dimension of the system).
    Interestingly, this does not pose a fundamental challenge to 
    the censorship protocol, because we expect that there often exist compact descriptions of the set of free states for most resources.
    In particular, resourceful states do not need to be specified, as they are not supposed to pass the network in the first place.

    \subsection{Breakable and unbreakable censorship}

    A central question in quantum censorship is, if it is possible to smuggle a resource past the agent \cite{PS24}. 
    This could be done for instance, by a sender injecting part of their quantum state into the description space or by multiple senders sharing resources before transmission.
    Before delving into these questions, we define when censorship has been broken \cite{PS24}.
    \begin{definition}
        \label{def:break}
        A censorship is breakable if there exists a state $\rho\in\mathcal{D}(M_1A_1\dots M_NA_N)$ such that $\Delta^{\otimes N}(\rho)\notin\mathcal{F}(B_1\dots B_N)$.
        Otherwise, censorship is said to be unbreakable.
    \end{definition}
    Simply speaking, censorship is breakable if a resource state reaches the receivers.
    Once this occurred, the receivers could coordinate their actions to make use of the resource.

    By virtue of Definition \ref{def:CRD}, it is not possible for a single sender $A$ to overcome the censorship, because any state $\rho_{MA}\in\mathcal{D}(M A)$ is mapped onto a free state $\Delta(\rho)\in\mathcal{F}(B)$ [see (iii)]. 
    Next, suppose a group of senders $A_1,\dots,A_N$ coordinate their actions to prepare an $N$-partite state $\rho\in\mathcal{D}(M_1A_1\dots M_NA_N)$.
    The agency wants to ensure that no resource state passes through the network.
    Formally, this corresponds to the question whether the channel $\Delta^{\otimes N}$ will always output a free state on the composite system?
    The answer will drastically vary depending on what the underlying structure of the QRT is.

    \subsubsection{Censorship of affine resource theories}
    
    For an affine QRT, the censorship turns out to be unbreakable as is established by the following theorem.
    \begin{theorem}
        \label{th:CenAff}
        If $\Delta$ is a conditional RD channel of an affine QRT $\mathcal{F}$,
        then censorship is unbreakable.
    \end{theorem}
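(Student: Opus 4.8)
The plan is to reduce the $N$-party claim to a single-branch computation and then bring in the affine structure of $\mathcal F$ twice. First I would regroup the description registers and expand an arbitrary $\rho\in\mathcal D(M_1A_1\cdots M_NA_N)$ in the product basis $\ket{\vec m}=\ket{m_1}\otimes\cdots\otimes\ket{m_N}$ of $\mathcal H_{M_1}\otimes\cdots\otimes\mathcal H_{M_N}$, in the manner of Eq.~\eqref{eq:exp}. Inserting the representation \eqref{eq:CRD-Rep} into each factor of $\Delta^{\otimes N}$, the single-site maps $\braket{m_\sigma|\,\cdot\,|m_\sigma}$ are rank-one and mutually orthogonal, so all coherences between distinct strings $\vec m\ne\vec n$ are annihilated and one is left with
\[ \Delta^{\otimes N}(\rho)=\sum_{\vec m} t_{\vec m\vec m}\,\big(\Delta_{m_1}\otimes\cdots\otimes\Delta_{m_N}\big)\big(\rho^{\vec m\vec m}\big), \]
where $t_{\vec m\vec m}\ge 0$, $\sum_{\vec m}t_{\vec m\vec m}=1$, and $\rho^{\vec m\vec m}\in\mathcal D(A_1\cdots A_N)$. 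The problem thus reduces to showing that $\big(\Delta_{m_1}\otimes\cdots\otimes\Delta_{m_N}\big)(\tau)\in\mathcal F(B_1\cdots B_N)$ for an arbitrary multipartite state $\tau$; the displayed expression is then a (convex) combination of such free outputs, which for an affine QRT stays free.

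The key observation for the branch channels is that linearity together with condition (v) \emph{alone} already forces $\Delta_{m_\sigma}$ to send \emph{every} Hermitian operator into the real linear span of $\mathcal F(B)$: decomposing a Hermitian $X=X_+-X_-$ into its positive and negative parts, each $X_\pm$ is a nonnegative multiple of a density operator, whose $\Delta_{m_\sigma}$-image lies in $\mathcal F(B)$ by (v), so $\Delta_{m_\sigma}(X)\in\mathrm{span}_{\mathbb R}\mathcal F(B)$. Tensoring $N$ such maps, and using $\mathrm{span}_{\mathbb R}\mathcal F(B_1)\otimes\cdots\otimes\mathrm{span}_{\mathbb R}\mathcal F(B_N)=\mathrm{span}_{\mathbb R}\big(\mathcal F(B_1)\otimes\cdots\otimes\mathcal F(B_N)\big)$, any $\tau\in\mathcal D(A_1\cdots A_N)$---being a real linear combination of product operators---is sent by $\Delta_{m_1}\otimes\cdots\otimes\Delta_{m_N}$ into $\mathrm{span}_{\mathbb R}\big(\mathcal F(B_1)\otimes\cdots\otimes\mathcal F(B_N)\big)$.

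Here the affine structure enters decisively. A channel output is a density operator, and any trace-one element of the linear span of a set of density operators is automatically an \emph{affine} combination of them---its coefficients sum to one, since each density operator has unit trace. Hence $\big(\Delta_{m_1}\otimes\cdots\otimes\Delta_{m_N}\big)(\tau)$ lies in $\mathrm{Aff}\big(\mathcal F(B_1)\otimes\cdots\otimes\mathcal F(B_N)\big)\cap\mathcal D(B_1\cdots B_N)$, which by Eq.~\eqref{eq:AffFree} is exactly $\mathcal F(B_1\cdots B_N)$. Since every summand of the displayed mixture now lies in the affine set $\mathrm{Aff}\big(\mathcal F(B_1)\otimes\cdots\otimes\mathcal F(B_N)\big)$, so does $\Delta^{\otimes N}(\rho)$; being in addition a density operator, it lies in $\mathcal F(B_1\cdots B_N)$, so the censorship is unbreakable.

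I do not anticipate a real obstacle; the part that needs care is bookkeeping about which hypothesis does what. Condition (v) plus linearity delivers only membership in the \emph{linear} span of products of free states, and it is solely the affineness of the QRT---through definition \eqref{eq:AffFree}---together with the trace-one normalization of channel outputs that upgrades this to membership in $\mathcal F(B_1\cdots B_N)$. One should also verify that the diagonal-block reduction genuinely removes every off-diagonal contribution, which is precisely where orthogonality of the message states is used---the same mechanism whose failure in the \emph{nonconvex} case injects unwanted randomness and defeats the analogous claim (cf.\ the proof of Theorem~\ref{th:CRD}).
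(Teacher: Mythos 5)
Your proposal is correct and follows essentially the same route as the paper: expand $\rho$ in the orthonormal message basis so that orthogonality of the $\ket{m_\sigma}$ reduces $\Delta^{\otimes N}(\rho)$ to the diagonal branches, then show each branch output $\big(\Delta_{m_1}\otimes\cdots\otimes\Delta_{m_N}\big)(\rho^{\vec m\vec m})$ is an affine combination of products of free states and invoke Eq.~\eqref{eq:AffFree}. The only (minor) difference is how that last step is justified---the paper expands the input in a basis of product density operators as in Eq.~\eqref{eq:basis}, whereas you track the real linear span of $\mathcal F$ through the tensor product via condition (v) and upgrade to an affine combination using trace-one normalization, which is an arguably cleaner packaging of the same argument.
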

    \begin{proof}
        Let $\rho\in\mathcal{D}(M_1 A_1\dots M_N A_N)$ be an arbitrary state of $N$ senders. 
        The state can be expanded as [see Eq. \eqref{eq:exp}]
        \begin{equation}
            \label{eq:FullExp}
            \begin{split}
                \rho &=\sum_{a_1\dots b_N}t_{a_1\dots b_N}\ket{m_{\sigma^{a_1}}\dots m_{\sigma^{a_N}}}\bra{m_{\sigma^{b_1}}\dots m_{\sigma^{b_N}}}\otimes \rho^{a_1 \dots b_N},
            \end{split}
        \end{equation}
        with coefficients satisfying $\sum_{a_1\dots b_N}t_{a_1\dots b_N}=1$ and $\rho^{a_1\dots b_N}$ being $N$-partite operators.
        After applying the censorship, the receiving parties are left with the state
        \begin{equation}
            \label{eq:proof1}
            \begin{split}
                \Delta^{\otimes N}(\rho)
                &=\sum_{a_1\dots a_N}t_{a_1\dots a_N} \big(\Delta_{m_{\sigma^{a_1}}}\otimes \dots \otimes\Delta_{m_{\sigma^{a_N}}}\big)(\rho^{a_1\dots a_N}),\\
                \end{split}
        \end{equation}
        where we made use of the orthogonality relation $\braket{m_{\sigma^a}|m_{\sigma^b}}=\delta_{ab}$.  
        Because the QRT $\mathcal{F}(A_a)$ is affine, its composite $\mathcal{F}(A_1\dots A_N)$ contains the affine hull \eqref{eq:AffFree}.
        Thus, $\Delta^{\otimes N}(\rho)$ is a free state, whenever $\big(\Delta_{m_{\sigma^{a_1}}}\otimes \dots \otimes\Delta_{m_{\sigma^{a_N}}}\big)(\rho^{a_1\dots a_N})$ is. 
        This is the case, as can be shown by expanding the state as an affine combination of product states, viz. 
        \begin{equation}
            \label{eq:basis}
            \rho^{a_1\dots a_N}=\sum_{b_1\dots b_N}s_{b_1\dots b_N}^{a_1\dots a_N}\omega^{b_1}\otimes\dots \otimes\omega^{b_N},
        \end{equation}
        where the states $\omega^{b_1}\otimes\dots \otimes\omega^{b_N}$ forms a basis of $\mathcal{D}(A_1\dots A_N)$. 
        It follows from condition (v), that
        \begin{equation}
            \Delta_{m_{\sigma^{a_1}}}\big(\omega^{b_1}\big),\dots, \Delta_{m_{\sigma^{a_N}}}\big(\omega^{b_N}\big),
        \end{equation}
        are free states. 
        Since, $\mathcal{F}(A_1\dots A_N)$ is affine, 
        \begin{equation}
            \big(\Delta_{m_{\sigma^{a_1}}}\otimes \dots \otimes\Delta_{m_{\sigma^{a_N}}}\big)(\rho^{a_1\dots a_N})\in\mathcal{F}(B_1\dots B_N)
        \end{equation}
        must be free as well, because Eq. \eqref{eq:basis} is an affine combination. 
        In conclusion, $\Delta^{\otimes N}(\rho)$ is a free state, thus proving the assertion.
    \end{proof}
    
    The above theorem is more general than previous results \cite{PS24}, as it holds even for affine QRTs for which no (linear) RD channel exists.
    For example, there does not exist a RD channel for the QRT of the imaginary \cite{G17}. 
    However, as will be shown in Sec. \ref{sec:Ex}, an unbreakable censorship for this resource can be established using a conditional RD channel.

    \subsubsection{Censorship of convex resource theories}

    When confronted with the censorship of a convex QRT, the set $\mathcal{F}(A_1\dots A_N)$, by definition, includes the convex hull \eqref{eq:ConvFree}.
    The set contains free, $N$-partite separable states. 
    Therefore, entanglement-breaking channels play an important role in their censorship. 
    A channel $\Lambda:\mathcal{D}(A)\to\mathcal{D}(B)$ is entanglement breaking \cite{HS03,W18}, if $\big(\mathrm{id}_C\otimes\Lambda\big)(\rho)$ is a separable state for any $\rho\in\mathcal{D}(CA)$.
    Here, $C$ is a reference system of arbitrary size.
    In more technical language, $\mathrm{id}_C\otimes\Lambda$ maps elements in $\mathcal{D}(CA)$ onto elements in the convex hull $\mathrm{Conv}\big(\mathcal{D}(C)\otimes\mathcal{D}(A)\big)$.
    Note that a quantum channel $\Lambda$ is entanglement breaking if and only if it has an operator-sum representation $\Lambda=\sum_a K_a(\,\cdot\,)K_a^\dagger$, with rank-one Kraus operators $K_a$ \cite{HS03}. 
    For example, the dephasing channels in Eqs. \eqref{eq:RDcoh} and \eqref{eq:deph} are entanglement breaking.
    Using the above terminology we have the following theorem.
    \begin{theorem}
    \label{th:CenCon}
        If $\Delta$ is an entanglement-breaking conditional RD channel of a convex QRT $\mathcal{F}$, then censorship is unbreakable.
    \end{theorem}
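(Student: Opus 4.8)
The plan is to follow the proof of Theorem~\ref{th:CenAff} step by step, replacing the appeal to affineness with an interplay of convexity and the entanglement-breaking structure of $\Delta$. First I would take an arbitrary $\rho\in\mathcal{D}(M_1A_1\dots M_NA_N)$ and expand it in the product basis of the description spaces exactly as in Eq.~\eqref{eq:FullExp}. The projective measurement on each $M_i$ that is built into $\Delta$, see Eq.~\eqref{eq:CRD-Rep}, annihilates every off-diagonal block through $\braket{m_{\sigma^a}|m_{\sigma^b}}=\delta_{ab}$, leaving
\begin{equation*}
\Delta^{\otimes N}(\rho)=\sum_{a_1\dots a_N} t_{a_1\dots a_N}\,\big(\Delta_{m_{\sigma^{a_1}}}\otimes\dots\otimes\Delta_{m_{\sigma^{a_N}}}\big)\big(\rho^{a_1\dots a_N}\big),
\end{equation*}
with $t_{a_1\dots a_N}\geq 0$, $\sum_{a_1\dots a_N}t_{a_1\dots a_N}=1$, and $\rho^{a_1\dots a_N}\in\mathcal{D}(A_1\dots A_N)$. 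Since the composite free set $\mathcal{F}(B_1\dots B_N)=\mathrm{Conv}\big(\mathcal{F}(B_1)\otimes\dots\otimes\mathcal{F}(B_N)\big)$ is convex, it suffices to prove that $\big(\Delta_{m_{\sigma^{a_1}}}\otimes\dots\otimes\Delta_{m_{\sigma^{a_N}}}\big)(\omega)\in\mathcal{F}(B_1\dots B_N)$ for an arbitrary $\omega\in\mathcal{D}(A_1\dots A_N)$.

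Next I would use that each conditional channel $\Delta_{m_{\sigma^{a_i}}}$ is entanglement breaking, and hence admits a rank-one Kraus representation $\Delta_{m_{\sigma^{a_i}}}=\sum_k\ket{\phi^{(i)}_k}\bra{\psi^{(i)}_k}(\,\cdot\,)\ket{\psi^{(i)}_k}\bra{\phi^{(i)}_k}$; equivalently it is a measure-and-prepare map with preparation states $\ket{\phi^{(i)}_k}\bra{\phi^{(i)}_k}$. Feeding $\omega$ through the product channel and collecting Kraus indices yields
\begin{equation*}
\big(\Delta_{m_{\sigma^{a_1}}}\otimes\dots\otimes\Delta_{m_{\sigma^{a_N}}}\big)(\omega)=\sum_{k_1\dots k_N} c_{k_1\dots k_N}\,\ket{\phi^{(1)}_{k_1}}\bra{\phi^{(1)}_{k_1}}\otimes\dots\otimes\ket{\phi^{(N)}_{k_N}}\bra{\phi^{(N)}_{k_N}},
\end{equation*}
where $c_{k_1\dots k_N}\geq 0$ are the diagonal matrix elements of $\omega$ in the product vectors $\ket{\psi^{(1)}_{k_1}}\otimes\dots\otimes\ket{\psi^{(N)}_{k_N}}$, so $\sum_{k_1\dots k_N}c_{k_1\dots k_N}=1$. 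This exhibits the output as an explicit convex combination of product pure states.

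The remaining step — and, I expect, the hardest one — is to argue that each one-party factor $\ket{\phi^{(i)}_{k_i}}\bra{\phi^{(i)}_{k_i}}$ is free. By condition (v), $\Delta_{m_{\sigma^{a_i}}}$ maps all of $\mathcal{D}(A_i)$ into the convex set $\mathcal{F}(B_i)$; the task is to show that the entanglement-breaking (equivalently, measure-and-prepare) decomposition can be chosen so that its preparation states lie in $\mathcal{F}(B_i)$, rather than taking an arbitrary rank-one resolution whose output vectors need not be free — as already the replacement channel of Eq.~\eqref{eq:repl} shows, where the natural rank-one Kraus operators produce the (possibly resourceful) eigenvectors of $\sigma$. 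Granting this, every summand above is a convex combination of products of free states, hence lies in $\mathrm{Conv}\big(\mathcal{F}(B_1)\otimes\dots\otimes\mathcal{F}(B_N)\big)=\mathcal{F}(B_1\dots B_N)$, and together with the first step $\Delta^{\otimes N}(\rho)\in\mathcal{F}(B_1\dots B_N)$, i.e.\ censorship is unbreakable.

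I expect the technical heart to be precisely the claim that \emph{any} entanglement-breaking channel whose image lies inside $\mathcal{F}(B_i)$ possesses a measure-and-prepare form with free preparation states; one route is to push the preparation states onto the extreme points of the (convex, closed) image of the channel and invoke condition (v) there. For the resources the theorem is meant to cover this is automatic: for entanglement every single-party state is free, so the claim is vacuous, while for the dephasing-type conditional RD channels of Eq.~\eqref{eq:deph} the preparation directions are exactly the eigenvectors carried in the state description, which are free by the very hypothesis under which that channel is a conditional RD channel.
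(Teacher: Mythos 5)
Your first half is exactly the paper's argument: the same expansion as in Eq.~\eqref{eq:FullExp}, the same observation that the diagonal coefficients $t_{a_1\dots a_N}$ are nonnegative and sum to one (the paper justifies this by noting that otherwise $\mathrm{Tr}_{A_1\dots A_N}(\rho)$ would not be a state), and the same reduction to showing that $\big(\Delta_{m_{\sigma^{a_1}}}\otimes\dots\otimes\Delta_{m_{\sigma^{a_N}}}\big)(\omega)$ is free for an arbitrary input $\omega$. Where you diverge is in the last step, and there your proposal is not complete: you reduce the theorem to the claim that an entanglement-breaking channel obeying (v) admits a measure-and-prepare decomposition whose preparation states all lie in $\mathcal{F}(B_i)$, and you only gesture at why this should hold (``push the preparation states onto the extreme points of the image''). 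That is not yet an argument --- the preparation states of a rank-one Kraus decomposition need not be attainable outputs of the channel (e.g.\ when no input state triggers a given POVM element with probability one), so condition (v) does not apply to them directly, and the passage from ``every output is free'' to ``some decomposition has free preparation states'' is precisely what would need proving. Your closing remark that for entanglement the claim is vacuous because ``every single-party state is free'' is also wrong in the paper's setting: each sender's system is itself $K$-partite and $\mathcal{F}(A_i)$ is the set of $K$-partite separable states, so a rank-one preparation state can perfectly well be resourceful; the claim is saved for the replacement channel \eqref{eq:repl} only because a separable $\sigma$ can be prepared as a mixture of pure product (hence free) states.

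For comparison, the paper closes this step differently: it expands $\rho^{a_1\dots a_N}$ as an affine combination of product states as in Eq.~\eqref{eq:basis}, applies (v) factor by factor to obtain products of free states, and then invokes the entanglement-breaking property to assert that the output is separable and therefore lands in the convex hull \eqref{eq:ConvFree}; in effect, entanglement breaking is used to upgrade the affine combination to a convex one. You should be aware that the paper is terse at exactly the spot you flagged --- identifying ``separable across the $B_1{:}\dots{:}B_N$ cut'' with ``element of $\mathrm{Conv}\big(\mathcal{F}(B_1)\otimes\dots\otimes\mathcal{F}(B_N)\big)$'' again presupposes that the product factors in the separable decomposition can be taken free --- so your instinct about where the technical heart lies is sound. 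But a proposal that ends with ``granting this'' has not proved the theorem; to complete it you would need to actually establish the free-preparation-state lemma, or restrict attention to channels such as \eqref{eq:repl} and \eqref{eq:deph}, for which it holds by inspection.
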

    \begin{proof}
        Recall that censorship is breakable if there is a state $\rho\in\mathcal{D}(M_1A_1\dots M_NA_N)$ such that a resource state $\Delta^{\otimes N}(\rho)\notin \mathcal{F}(B_1\dots B_N)$ reaches the receivers; see Definition \ref{def:CRD}. 
        Using a similar expansion as in Eq. \eqref{eq:FullExp} shows that
        \begin{equation}
            \label{eq:proof2}
            \begin{split}
                \Delta^{\otimes N}(\rho)
                &=\sum_{a_1\dots a_N}t_{a_1\dots a_N} \big(\Delta_{m_{\sigma^{a_1}}}\otimes \dots \otimes\Delta_{m_{\sigma^{a_N}}}\big)(\rho^{a_1\dots a_N}).\\
                \end{split}
        \end{equation}
        We now show that the sum in Eq. \eqref{eq:proof2} is convex, i.e., $t_{a_1\dots a_N}\geq0$. 
        This can be seen by computing the reduced quantum state 
        \begin{equation}
            \mathrm{Tr}_{M_1\dots M_N}(\rho)=\sum_{a_1\dots a_N}t_{a_1\dots a_N} \rho^{a_1\dots a_N},
        \end{equation}
        where $\rho^{a_1\dots a_N}\in\mathcal{D}(A_1\dots A_N)$ can be arbitrary quantum states.
        Hence, in order for $\mathrm{Tr}_{M_1\dots M_N}(\rho)$ to be a quantum state, we must have that $t_{a_1\dots a_N}\geq0$ are probabilities [see Eq. \eqref{eq:exp} as well].
        Thus, the sum in Eq. \eqref{eq:proof2} is convex.
        Because the QRT $\mathcal{F}(A_a)$ is convex, its composite $\mathcal{F}(A_1\dots A_N)$ contains the convex hull \eqref{eq:ConvFree}.
        It remains to show that $\big(\Delta_{m_{\sigma^{a_1}}}\otimes \dots \otimes\Delta_{m_{\sigma^{a_N}}}\big)(\rho^{a_1\dots a_N})$ is a free state. 
        By assumption $\Delta$ is entanglement breaking. 
        Thus, it maps onto separable states and can be written as a convex sum
        \begin{equation*}
            \begin{split}
                \big(\Delta_{m_{\sigma^{a_1}}}\otimes \dots \otimes\Delta_{m_{\sigma^{a_N}}}\big)(\rho^{a_1\dots a_N})&=\sum_{b_1\dots b_N}q_{b_1\dots b_N}^{a_1\dots a_N}\omega^{b_1}\otimes\dots \otimes\omega^{b_N},\\
                &\in\mathrm{Conv}\big(\mathcal{D}(A_1)\otimes\dots\otimes\mathcal{D}(A_N)\big),\\
            \end{split}
        \end{equation*}
        with probabilities $q_{b_1\dots b_N}^{a_1\dots a_N}\geq 0$. 
        Applying $\Delta_{m_\sigma}$ again, we get
        \begin{equation}
            \begin{split}
                &\big(\Delta_{m_{\sigma^{a_1}}}\otimes \dots \otimes\Delta_{m_{\sigma^{a_N}}}\big)^2(\rho^{a_1\dots a_N})\\
                &=\big(\Delta_{m_{\sigma^{a_1}}}\otimes \dots \otimes\Delta_{m_{\sigma^{a_N}}}\big)(\rho^{a_1\dots a_N}),
            \end{split}
        \end{equation}
        because $\Delta_{m_\sigma}^2=\Delta_{m_\sigma}$ [see (v) and (vi)]. 
        This implies $\Delta_{m_\sigma}(\omega^{b_c})=\omega^{b_c}$ for all $c=1,\dots,N$. 
        Hence, $\omega^{b_c}\in\mathcal{F}(B_c)$ is a free state.
        It is then clear that $\big(\Delta_{m_{\sigma^{a_1}}}\otimes \dots \otimes\Delta_{m_{\sigma^{a_N}}}\big)(\rho^{a_1\dots a_N})$ is a convex sum of free states. 
        Hence, $\big(\Delta_{m_{\sigma^{a_1}}}\otimes \dots \otimes\Delta_{m_{\sigma^{a_N}}}\big)(\rho^{a_1\dots a_N})\in\mathcal{F}(B_1\dots B_N)$ is free due to Eq. \eqref{eq:ConvFree}.
        In conclusion, $\Delta^{\otimes N}(\rho)$ is a free state, thus proving the assertion.
    \end{proof}

    The above result is crucial in setting up a censorship of entanglement, as will be done in Sec. \ref{sec:Ex}.
    In this QRT, the set of free (i.e., separable) states is convex, but not affine.
    Therefore, Theorem \ref{th:CenCon} allows us to establish an unbreakable censorship.
    In summary, the censorship discussed here, remains unbreakable for a much wider variety of situations than what can be accomplished with (unconditional) RD channels.

    \subsubsection{Censorship of resources that can be activated}

    So far, we only considered resources that admit a tensor-product structure (see Sec. \ref{ssec:TPS}).
    Nevertheless, there are more exotic types of resources, such as Bell nonlocality \cite{B64}, that do not admit a tensor-product structure.
    In such theories, resources can be activated \cite{P12}, i.e., the parallel preparation of free states $\sigma^1,\dots,\sigma^N$ of the individual senders can result in a resource state on the composite system.
    In formula
    \begin{equation}
        \sigma^1\otimes\dots\otimes\sigma^N\notin\mathcal{F}(A_1\dots A_N).
    \end{equation}
    Equivalently, we can write 
    \begin{equation}
        \mathcal{F}(A_1)\otimes\dots\otimes\mathcal{F}(A_N)\nsubseteq\mathcal{F}(A_1\dots A_N).
    \end{equation}
    For an example of resource activation see Sec. \ref{ssec:Bell} or Ref. \cite{P12}.
    These resources pose a particular challenge to a decentralized censorship protocol, because it cannot be verified locally, whether the network contains a resource.
    More precisely, the censorship is breakable.

    \begin{theorem}
        \label{th:TPS}
        Let $\sigma^1\in\mathcal{F}(A_1),\dots, \sigma^N\in\mathcal{F}(A_N)$ be free states of a QRT that can be activated.
        Then censorship is breakable.
    \end{theorem}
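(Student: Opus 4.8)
The plan is to break the censorship with the single most innocuous-looking input: the one in which every sender behaves honestly. Concretely, I would take
\begin{equation*}
    \rho=\bigl(\ket{m_{\sigma^1}}\bra{m_{\sigma^1}}\otimes\sigma^1\bigr)\otimes\dots\otimes\bigl(\ket{m_{\sigma^N}}\bra{m_{\sigma^N}}\otimes\sigma^N\bigr)\in\mathcal{D}(M_1A_1\dots M_NA_N),
\end{equation*}
where $m_{\sigma^a}$ is the truthful description of the free state $\sigma^a\in\mathcal{F}(A_a)$. This is a perfectly admissible input to the protocol: each sender transmits a free state together with a correct description, exactly as the user agreement prescribes, so no party is even being uncooperative.

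Next I would push $\rho$ through $\Delta^{\otimes N}$ factor by factor. By the freeness-preserving property of a conditional RD channel — condition (iv) of Definition \ref{def:CRD}, or equivalently condition (vi) via the measurement representation \eqref{eq:CRD-Rep} — each factor satisfies $\Delta\bigl(\ket{m_{\sigma^a}}\bra{m_{\sigma^a}}\otimes\sigma^a\bigr)=\sigma^a$, so that
\begin{equation*}
    \Delta^{\otimes N}(\rho)=\sigma^1\otimes\dots\otimes\sigma^N.
\end{equation*}
Finally I would invoke the activation hypothesis: since the QRT can be activated and the $\sigma^a$ are the witnessing free states, $\sigma^1\otimes\dots\otimes\sigma^N\notin\mathcal{F}(A_1\dots A_N)$, and identifying each receiving system $B_a$ with $A_a$ gives $\Delta^{\otimes N}(\rho)\notin\mathcal{F}(B_1\dots B_N)$. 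By Definition of breakability this is exactly the claim.

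I do not expect a genuine technical obstacle here — the argument is a direct composition of the defining property of conditional RD channels with the activation assumption. The one point that deserves care is conceptual rather than computational: one must emphasise that the receivers, once jointly holding $\sigma^1\otimes\dots\otimes\sigma^N$, collectively possess a resource even though no individual agent ever saw anything but a free state on its own sender–receiver line, so a decentralised censorship protocol of the kind considered here is fundamentally powerless against activatable resources. It is also worth remarking — in contrast to Theorems \ref{th:CenAff} and \ref{th:CenCon} — that no strengthening of the individual channel $\Delta$ (for instance demanding it be entanglement breaking) can rescue the protocol, since the breaking state is already a product across the sender–receiver pairs and $\Delta^{\otimes N}$ acts on it trivially on each local free state.
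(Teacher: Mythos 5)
Your proposal is correct and follows essentially the same route as the paper's own proof: choose $\rho=\ket{m_{\sigma^1}}\bra{m_{\sigma^1}}\otimes\sigma^1\otimes\dots\otimes\ket{m_{\sigma^N}}\bra{m_{\sigma^N}}\otimes\sigma^N$, apply property (iv) factorwise to obtain $\Delta^{\otimes N}(\rho)=\sigma^1\otimes\dots\otimes\sigma^N$, and invoke activation to conclude this output is a resource. Your closing remarks on why no local strengthening of $\Delta$ can help are a sound addition but not needed for the proof itself.
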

    \begin{proof}
        Recall that censorship is breakable if a  there is a resource state $\rho\in\mathcal{D}(M_1 A_1\dots M_N A_N)$ that reaches the receivers, i.e., $\Delta^{\otimes N}(\rho)\notin \mathcal{F}(B_1\dots B_N)$ (Definition \ref{def:break}).
        Here, choose
        \begin{equation}             \rho=\ket{m_{\sigma^1}}\bra{m_{\sigma^1}}\otimes\sigma^1\dots \otimes\ket{m_{\sigma^N}}\bra{m_{\sigma^N}}\otimes\sigma^N.
        \end{equation}
        The censorship protocol acts locally onto each sender's state resulting in
        \begin{equation}
                \Delta^{\otimes N}\big(\rho\big)=\sigma^1\otimes\dots\otimes\sigma^N,
        \end{equation}
        where property (iv) was used.
        Since, the state $\sigma^1\otimes\dots\otimes\sigma^N\notin\mathcal{F}(A_1\dots A_N)$ is a resource, censorship has been broken.
    \end{proof}
    Resources that can be activated can thus not be regulated in a public-domain quantum network.

    \subsection{Censorship over noisy channels}

    The above discussion was concerned only with noiseless communication;
    that is, each sender is connected to a receiver via an identity channel.
    In any realistic situation, we expect communication to be performed over a noisy channel $\Phi$.
    In the context of the censorship protocol, one might be worried that the conditional RD channel $\Delta$ introduces additional changes.
    Thus, we consider the noisy process $\Phi$ to occur before the channel $\Delta$.
    The information-processing protocol for the noisy case reads
    \begin{equation*}
        \Qcircuit @C=1em @R=.7em {
            \lstick{}& \qw & \meter & \cctrlo{1} & \rstick{}\\
            \lstick{}& \gate{\Phi} & \qw & \gate{\Delta_m} & \qw & \rstick{B_1}\qw\inputgroup{1}{2}{1.1em}{A_1}\\ 
            \lstick{\vdots} & &\vdots & & \vdots\\
    		\lstick{} & & & & \\
            \lstick{}& \qw & \meter & \cctrlo{1} & \rstick{}\\
            \lstick{}& \gate{\Phi} & \qw & \gate{\Delta_m} & \qw & \rstick{B_N.}\qw\inputgroup{5}{6}{1.1em}{A_N}\\ 
        }
    \end{equation*}

	Throughout, we make the reasonable assumption that the noise does not create a resource from a free state, i.e., $\Phi$ is resource non-generating \cite{CG19,LH17}.
    Formally, for any free state $\sigma\in\mathcal{F}(A_a)$, one has $\Phi(\sigma)\in\mathcal{F}(A_a)$.
    From a resource-theoretical viewpoint, resource non-generating channels form the largest class of channels that can be considered to be the free channels of a QRT \cite{CFS16}.
    Moreover, we assume the state description $m_\sigma$ to be transmitted via a noiseless classical channel, and to be read out correctly.
    This seems reasonable as there exist well established and scalable error-correcting and fault-tolerance procedures for classical communication \cite{PH98}.
    
    If censorship is established via an (unconditional) RD channel $\Delta$ (i.e., the message space $M$ is one-dimensional), then $\Delta(\sigma)=\sigma$ for any $\sigma\in\mathcal{F}(A)$.
	This implies that the noise $\Phi$ commutes with $\Delta$ on the set of free states, i.e., 
    \begin{equation}
        \forall \sigma\in\mathcal{F}(A): \quad(\Delta\circ\Phi)(\sigma)=(\Phi\circ\Delta)(\sigma)=\Phi(\sigma).
    \end{equation}
	The noisy state $\Phi(\sigma)$ reaches the receiver, and the censorship did not introduce undesirable changes through $\Delta$ \cite{PS24}.

    The situation is more involved if we consider censorship using a conditional RD channel \eqref{eq:CRD-Rep}. 
    This is due to the reason that after the noise occurs, the state description $m_\sigma$ differs from the state $\Phi(\sigma)$ being censored.
    Thus,
    \begin{equation}
        \Delta(\ket{m_\sigma}\bra{m_\sigma}\otimes \Phi(\sigma))\neq \Phi(\sigma)
    \end{equation}
    is generally not preserved. 
    It would of course be desirable if the output state $\Delta_{m_\sigma}(\Phi(\sigma))$ is closer to the input state $\sigma$ than the noisy state $\Phi(\sigma)$. 
    Thus, removing part of the noise. 
    
    As a case study, consider the channel $\Delta_{m_\sigma}$ in Eq. \eqref{eq:deph}.
    It implements a dephasing with respect to the eigenvectors of $\sigma$.
    Let $\Vert A\Vert=\sqrt{\mathrm{Tr}(A^\dag A)}$ denote the Hilbert-Schmidt norm. 
    Furthermore, let $\sigma=\sum_a\lambda_a\ket{\sigma^a}\bra{\sigma^a}$ and $\Phi(\sigma)=\sum_a\mu_a\ket{\phi^a}\bra{\phi^a}$ be spectral resolutions. 
    Their distance is computed to
    \begin{equation}
       \begin{split}
           \Vert\sigma-\Phi(\sigma)\Vert^2&=\sum_a\lambda_a^2-2\mathrm{Tr}\big(\sigma\Phi(\sigma)\big)+\sum_a\mu_a^2,\\
           &=\sum_a\lambda_a^2-2\sum_{a}\lambda_a\nu_a+\sum_a\mu_a^2,
       \end{split}
    \end{equation}
    where we defined probabilities $\nu_a=\sum_{b}\mu_b|\braket{\sigma^b|\phi^a}|^2$.
    Next, the state after censorship is found to be
    \begin{equation}
        \Delta_{m_\sigma}(\Phi(\sigma))=\sum_{a,b}\mu_a|\braket{\sigma^b|\phi^a}|^2\ket{\sigma^b}\bra{\sigma^b}.
    \end{equation}
    We then have
    \begin{equation}
       \begin{split}
           \Vert\sigma-\Delta_{m_\sigma}(\Phi(\sigma))\Vert^2&=\sum_a\lambda_a^2-2\mathrm{Tr}\big[\sigma\Delta_{m_\sigma}\big(\Phi(\sigma)\big)\big]+\sum_a\nu_a^2,\\
           &=\sum_a\lambda_a^2-2\sum_{a}\lambda_a\nu_a+\sum_a\nu_a^2.
       \end{split}
    \end{equation}
    Noting that the purity $\mathrm{Tr}\big[\Phi(\sigma)^2\big]=\sum_a \mu_a^2$ of any state $\Phi(\sigma)$ is non-increasing under the (unital) dephasing channel $\Delta_{m_\sigma}$ (see Appendix \ref{app:pur} for a proof), it follows that 
    \begin{equation}
    \label{eq:non-inc}
    \sum_a\nu_a^2\leq\sum_a\mu_a^2.
    \end{equation}
    Comparing the computed distances yields
    \begin{equation}
        \Vert\sigma-\Delta_{m_\sigma}(\Phi(\sigma))\Vert \leq \Vert\sigma-\Phi(\sigma)\Vert.
    \end{equation}
    The conditional RD channel \eqref{eq:deph} leads to a correction towards the free state $\sigma$ that was originally to be transmitted. 
    It should be stressed that the above argument relies on the assumption that the state description $m_\sigma$ remained noiseless during transmission. 
    The result does not extend to the case in which the state description is subject to noise as well.

    In summary, when an agent implements the censorship protocol using a conditional RD channel, a robust implementation of the state description, possibly via classical electronics or photonics, not only ensures that the censorship remains functional, but might have a noise-suppressing effect as well. 
    Thus, improving the user experience in the public-domain network.

    \section{Censorship of specific resources}
    \label{sec:Ex}
    
     In the following, the censorship protocol is applied to several resources for which no RD channel exists.
     This includes imaginarity, entanglement, discord, and nonlocality.
     We show that the censorship protocol of imaginarity and entanglement is unbreakable.
     In contrast, we show that discord and nonlocality can be smuggled past the agent in a systematic way.

    \subsection{Quantum censorship of the imaginary}
    
    Quantum theory is formulated using complex numbers.
    In the QRT of imaginarity \cite{WK21,XG21,HG18} the set of free states consists of all density operators whose imaginary part vanishes, i.e.,
    \begin{equation}
        \mathcal{F}(A)=\big\{\sigma\in\mathcal{D}(A)\,\big|\,\braket{a|\sigma|b}\in\mathbb{R}\big\}.
    \end{equation}
    Similar to coherence, imaginarity is defined with respect to a reference basis $\{\ket{a}\}_a$.
    To put forward the imaginary as a resource is motivated by several applications in discrimination tasks \cite{Z21}, pseudo-randomness generation \cite{HBK23}, and quantum metrology \cite{CSD19}.

    An RD map for imaginary is given by
    \begin{equation}
        \Delta(\rho)=\frac{1}{2}\big(\rho+\rho^{\mathrm{T}}\big),
    \end{equation}
    where the superscript $(\,\cdot\,)^\mathrm{T}$ denotes transposition with respect to the reference basis.
    Because $\Delta$ is not completely positive it fails to correspond to any physical operation, i.e., it is not an RD channel.
    Notably, even though the set $\mathcal{F}(A)$ is affine, no RD channel for this QRT exists \cite{G17}.
    Remarkably, the conditional censorship protocol establishes an unbreakable censorship of the imaginary.
    This is due to the QRT being affine, and thus Theorem \ref{th:CenAff} ensures that the censorship protocol is unbreakable. 

    One is left with the task of choosing a conditional RD channel $\Delta$ to be deployed in the protocol.
    Let $\Delta$ be the conditional RD channel defined in Eq. \eqref{eq:CRD-Rep}, with 
    \begin{equation}
        \Delta_{m_\sigma}(\rho)=\sum_a\ket{\sigma^a}\bra{\sigma^a}\rho\ket{\sigma^a}\bra{\sigma^a},
    \end{equation}
    as given in Eq. \eqref{eq:deph}.
    As we noted in Sec. \ref{sec:QRT}, $\Delta_{m_\sigma}$ can only function as a conditional RD channel, if the eigenstates $\ket{\sigma^a}$, of any free state $\sigma$, are free as well. 
    This is the case, because $\sigma=\sigma^\mathrm{T}$ and thus $\ket{\sigma^a}\bra{\sigma^a}=\ket{\sigma^a}\bra{\sigma^a}^\mathrm{T}\in\mathcal{F}(A)$ by linearity.
    
    The description of a real-valued quantum state $\sigma$ demands for the specification of $(d+2)(d-1)/2$ real numbers in the state description $m_\sigma$.
    In contrast, $d^2-1$ real numbers are necessary to specify an arbitrary $d\times d$ density operator.
    Note that this is still a non-negligible demand on a users computational expense. 
    To an extend, this was to be expected, because real-valued quantum mechanics can efficiently simulate complex quantum mechanics \cite{KM09}.
    Thus, it should not be possible to find an efficient description $m_\sigma$ of an arbitrary free state $\sigma$. 

    \subsection{Censorship of quantum entanglement}

    Quantum entanglement describes correlations between two or more of the system’s components that cannot be attributed to any classical joint description \cite{W89}. It thus refers to a lack of separability between subsystems.
    A quantum state $\sigma\in\mathcal{D}(A)$, acting on a $K$-partite system $\mathcal{H}_A=(\mathbb{C}^d)^{\otimes K}$, is said to be (fully) separable if it can be written as a probabilistic mixture of (pure) product states \cite{HH09}
    \begin{equation}
        \label{eq:SepState}
        \sigma=\sum_a p_a \sigma^a_1\otimes\dots\otimes \sigma^a_K,
    \end{equation}
    with $p_a\geq 0$ and $\sum_ap_a=1$ being a probability distribution.
    Otherwise, the state is said to be entangled.
    
    In the QRT of entanglement \cite{CL14,CV20}, the set of free states $\mathcal{F}(A)$ contains all separable $K$-partite states of the system $A$. 
    Every convex sum of separable states is again separable; entanglement theory is therefore a convex QRT.
    Mathematically, $\mathcal{F}(A)$ is the convex hull of $\mathcal{D}(\mathbb{C}^d)^{\otimes K}$.
    On the other hand, every $K$-partite quantum state, including entangled ones, can be written as an affine combination of pure product states. 
    From this viewpoint, entanglement theory is maximally nonaffine \cite{CG19}.
    This follows from the fact that pure product states form a basis for the space of linear operators.
    
    Since $\mathcal{F}(A)$ is not affine there is no RD channel for this resource.
    In Ref. \cite{PS24}, a censorship of entanglement was still put forward.
    However, this censorship only preserved a small subset of separable states (those with zero diagonal discord) and was breakable by uncooperative parties in the network. 

    Remarkably, we can establish an unbreakable censorship using the replacement channel $\Delta_{m_\sigma}(\rho)=\mathrm{Tr}(\rho)\sigma$ from Eq. \eqref{eq:repl}.
    The channel $\Delta_{m_\sigma}$ provides us with a conditional RD channel via Eq. \eqref{eq:CRD-Rep}.
    The corresponding censorship protocol is unbreakable due to Theorem \ref{th:CenCon}. 
    The theorem is applicable because $\Delta_{m_\sigma}$ is entanglement breaking.
    Note that, the state description of a separable state $m_\sigma$ contains only the information of $Kd$ amplitudes for each state in an ensemble. 
    Thus, the number of amplitudes to be stored in $m_\sigma$ scales linearly in the number of qudits $K$. 
    Despite being nonlinear, the censorship protocol can be implemented efficiently.

    Note that it is not possible to use the dephasing channel $\Delta_{m_\sigma}=\sum_{a}\ket{\sigma^a}\bra{\sigma^a}(\,\cdot\,)\ket{\sigma^a}\bra{\sigma^a}$ from Eq. \eqref{eq:deph} for the censorship.
    This is due to the eigenstates $\ket{\sigma^a}$, of a separable state $\sigma$, possibly being entangled.
    Thus, $\Delta_{m_\sigma}$ fails to be a conditional RD channel for entanglement.
    
    To see this explicitly, consider the isotropic state
    \begin{equation}
        \sigma=p\phi^++(1-p)\frac{\mathbb{1}^{\otimes 2}}{d^2}
    \end{equation}
    of a two-qudit system, i.e., $K=2$. 
    Here, $\phi^+=\ket{\phi^+}\bra{\phi^+}$ is a maximally entangled state, with $\ket{\phi^+}=\tfrac{1}{\sqrt{d}}\sum_{a=1}^d\ket{aa}$. 
    We consider the state $\sigma$ to be separable, which is the case for $p\leq 1/(1+d)$.
    However, $\sigma$ has still an entangled eigenstate $\ket{\phi^+}$. 
    One immediately verifies that this entangled state is smuggled past the agent when given the description $m_\sigma$ of $\sigma$, viz.
    \begin{equation}
        \begin{split}
            \Delta_{m_\sigma}(\phi^{+})&=\sum_{a}\ket{\sigma^a}\bra{\sigma^a}\phi^{+}\ket{\sigma^a}\bra{\sigma^a},\\
            &=\phi^{+}\notin\mathcal{F}(B),\\
        \end{split}
    \end{equation}
    where we used that one of the eigenstates $\ket{\sigma^a}$ of $\sigma$ corresponds to $\ket{\phi^+}$ and that the other eigenstates are orthogonal to $\ket{\phi^+}$. 
    In this case, the entangled state $\phi^{+}$ reaches the receiver $B$ and the censorship has been broken.

    \subsubsection{Quantum optical realization}

    A physical realization of the censorship can be deployed using polarization optics. 
    For simplicity, we restrict attention to the transmission of pure two-qubit states, i.e., $K=2$ and $d=2$. 
    A sender $A$ prepares the (entangled) Bell state 
    \begin{equation}
        \ket{\psi}=\frac{1}{\sqrt{2}}\big(\ket{HH}+\ket{VV}\big)
    \end{equation}  
    in their lab.
    Here, $\ket{H}$ and $\ket{V}$ denote horizontal and vertical polarization, respectively.
    To prevent the transmission of quantum entanglement to $B$, the agent simply applies two polarization filters to the state $\ket{\psi}$.
    Before doing so, the agent demands a state description $m=(\alpha_{H},\alpha_{V},\beta_{H},\beta_{V})$ corresponding to the free (i.e., separable) state
    \begin{equation*}
        (\alpha_{H}\ket{H}+\alpha_{V}\ket{V})\otimes (\beta_{H}\ket{H}+\beta_{V}\ket{V}).
    \end{equation*}

    Since $A$ cannot give truthful testimony of their Bell state (it is a resource), $A$ pretends to have prepared the separable state $\sigma=\ket{+-}\bra{+-}$, where $\ket{\pm}=(\ket{H}\pm\ket{V})/\sqrt{2}$,
    denotes the diagonal and antidiagonal polarization state, respectively.
    In other words, $A$ gives the description $m_{\sigma}=\big(\tfrac{1}{\sqrt{2}},\tfrac{1}{\sqrt{2}},\tfrac{1}{\sqrt{2}},-\tfrac{1}{\sqrt{2}}\big)$ to the agent. 
    Accordingly, the agent performs their projective measurements $\ket{+}\bra{+}$ and $\ket{-}\bra{-}$ on the first and second qubit, respectively.
    After the measurement, the receiver $B$ is thus left with the state
    $\Delta_{m_{\sigma}}(\ket{\psi}\bra{\psi})=\sigma$
    which is separable.
    The censorship was successful in prohibiting the transmission of the Bell state $\ket{\psi}$.
    In contrast, would $A$ have actually prepared the state $\sigma$ as claimed in their state description $m_\sigma$, then the state would have not been altered, i.e., $\Delta_{m_\sigma}(\sigma)=\sigma$.
    
    \subsection{Censorship of quantum discord}
    
    Quantum discord \cite{OZ01,MB12} is, among others \cite{WP09}, a nonconvex measure of quantum correlations in bipartite systems $\mathcal{H}=\mathcal{X}\otimes\mathcal{Y}$. 
    It is defined as 
    \begin{equation}
        \label{eq:discord}
        \delta(\rho)_{\mathcal{X}\mathcal{Y}}=I(\rho)-\underset{\{P_\mathcal{Y}^a\}_a}{\mathrm{max}} I(\rho^\prime)
    \end{equation}
    where $I(\rho)$ is the quantum mutual information and the maximization is taken over projective measurements $\{P_\mathcal{Y}^a\}_a$, i.e.,
    \begin{equation}
        \rho^\prime=\sum_a\mathrm{Tr}_\mathcal{Y}(\mathbb{1}_\mathcal{X}\otimes P_\mathcal{Y}^a\rho)\otimes \ket{a}\bra{a}_\mathcal{Y}.
    \end{equation}
    Here, $\{\ket{a}\}_a$ denotes an orthonormal basis of $\mathcal{Y}$. 
    The discord $\delta(\rho)_{\mathcal{X}\mathcal{Y}}$ can be interpreted as the correlations that remain when the classical correlations in $\rho$ are subtracted from its total correlations \cite{CG19,OZ01}. 
    Other authors attribute discord to noncommutativity \cite{L08,HFZ12}. 
    Unlike entanglement, discord can exist in separable states as well.
    
    In the QRT of quantum discord, the free states $\sigma\in\mathcal{F}(A)$ are those with vanishing discord, i.e., $\delta(\sigma)_{\mathcal{X}_A\mathcal{Y}_A}=0$.
    It was shown that, a state $\sigma$ has vanishing discord, if and only if, it is a classical-quantum state \cite{D10,BD13}
    \begin{equation}
        \label{eq:CQS}
        \sigma=\sum_{a}q_a \ket{a}\bra{a}_{\mathcal{Y}_A}\otimes \omega^a_{\mathcal{X}_A}.
    \end{equation}
    Here, $\{\ket{a}\}_a$ can be any orthonormal basis of $\mathcal{Y}_{A}$ and $\omega^a\in\mathcal{D}(\mathcal{X}_A)$ are arbitrary quantum states.
    
    The set $\mathcal{F}(A)$ of classical-quantum states is nonconvex.
    It follows that, there is no RD channel for this resource.
    According to Theorem \ref{th:CRD}, there is no conditional RD channel for this theory as well.
    To see explicitly where censorship fails, consider a sender $A$ injecting the state 
    \begin{equation}
        \label{eq:Dis}
        \rho=\sum_{b}p_b\ket{m_{\sigma^b}}\bra{m_{\sigma^b}}\otimes\sigma^{b}_{\mathcal{X}_A\mathcal{Y}_A}
    \end{equation}
    into the circuit of the conditional censorship protocol
    \begin{equation*}
        \Qcircuit @C=1em @R=.7em {
            \lstick{}& \qw & \meter & \cctrlo{1} & \rstick{}\\
            \lstick{\mathcal{X}_A}& \qw & \qw & \multigate{1}{\Delta_m} & \qw & \rstick{\mathcal{X}_B}\qw\\ 
            \lstick{\mathcal{Y}_A}& \qw & \qw & \ghost{\Delta_m} & \qw & \rstick{\mathcal{Y}_B.}\qw
        }
    \end{equation*}
    In Eq. \eqref{eq:Dis}, $\sigma^b$ are considered to be discord-free states of the form \eqref{eq:CQS}, so that they pass the censorship.
    In formula, $\Delta_{m_{\sigma^b}}(\sigma^b)=\sigma^b$, when the correct description $m_{\sigma^b}$ is provided.
    A direct calculation reveals the receiver's state to be
    \begin{equation}
        \label{eq:CD}
        \begin{split}
           \Delta(\rho)&=\sum_{m_{\sigma^a},m_{\sigma^b}}p_a|\braket{m_{\sigma^a}|m_{\sigma^b}}|^2 \Delta_{m_{\sigma^a}}(\sigma^a),\\
           &=\sum_bp_b\Delta_{m_{\sigma^b}}(\sigma^b),\\
           &=\sum_bp_b\sigma^b,
        \end{split}
    \end{equation}
    where we made use of the orthogonality $\braket{m_{\sigma^a}|m_{\sigma^b}}=\delta_{ab}$.  
    Since, discord is a nonconvex resource, the convex sum $\sum_bp_b\sigma^b$ obtained by $B$ may be resourceful.
    Our failure to establish a censorship on discord is due to the fact that probabilistic mixtures constitute a resource in this QRT.

    \subsection{Bell nonlocality}
    \label{ssec:Bell}

    Quantum nonlocality describes our incapability to attribute certain phenomena in quantum physics to any local hidden-variable model \cite{ADA14}, such as the EPR paradox \cite{EPR35}.
    It is known to be an essential ingredient for quantum cryptography \cite{E91}.
    The notion of nonlocality dates back to the seminal work by Bell \cite{B64}, in which it was shown that the assumption of local hidden-variables leads to (Bell-)inequalities which are violated by certain entangled states.

    In the QRT of quantum nonlocality \cite{HH05,LC18}, the free states of a sender $A$ are characterized via the probability distribution they produce from local measurements. 
    More precisely, let $\rho$ be a bipartite state $\rho\in\mathcal{D}(A)=\mathcal{D}(\mathcal{X}\otimes\mathcal{Y})$.
    Consider local POVMs (positive-operator valued measures) $\{M_{a|x}\}_a$ and $\{N_{b|y}\}_b$ for the subsystem $\mathcal{X}$ and $\mathcal{Y}$, respectively. 
    Depending on the choice of POVMs $(x,y)$ and the measurement outcome $(a,b)$, the probability is computed according to Born's rule
    \begin{equation}
        \label{eq:Born}
        p(a,b|x,y)=\mathrm{Tr}\big(\rho(M_{a|x}\otimes N_{b|y})\big).
    \end{equation}
    The state $\rho$ is said to be nonlocal, if the probability \eqref{eq:Born} cannot be written as a probability 
    \begin{equation}
        \label{eq:LHVM}
        p(a,b|x,y)=\sum_{\lambda}q_\lambda p_{\mathcal{X}}(a|x,\lambda)p_{\mathcal{Y}}(b|y,\lambda)
    \end{equation}
    originating from classical local operations and shared randomness. 
    Here, $q_\lambda$ denotes a shared probability distribution, while $p_{\mathcal{X}}(a|x,\lambda)$ and $p_{\mathcal{Y}}(b|y,\lambda)$ are probabilities on the individual subsystem, possibly conditioned on the outcome $\lambda$.

    If a bipartite state $\sigma\in\mathcal{D}(A)$ admits a probability distribution of the form \eqref{eq:LHVM}, then it is said to be local. 
    In the QRT of nonlocality, local states form the set of the free states $\mathcal{F}(A)$.
    It is readily seen that separable states $\sigma=\sum_{\lambda} q_\lambda\omega^\lambda_{\mathcal{X}}\otimes \tau^\lambda_{\mathcal{Y}}$ are local.
    If, we are concerned with pure states only, a pure state $\ket{\psi}$ is nonlocal if and only if it is entangled \cite{GP92}.
    However, there are mixed entangled states that are still local, see, e.g., Refs. \cite{B02,W89}. 
    The resource of nonlocality is therefore distinguished from entanglement.
    In particular, nonlocality can be activated \cite{P12}, that is, there exist local states $\sigma^1,\dots,\sigma^N$ such that $\sigma^1\otimes\dots\otimes\sigma^N$ is nonlocal.
    According to Theorem \ref{th:TPS}, the censorship protocol can be overcome by multiple senders preparing an activated state.

    For concreteness, let each sender prepare the state
    \begin{equation}
        \sigma=p\phi^++(1-p)\frac{\mathbb{1}^{\otimes 2}}{d^2},
    \end{equation}
    where $\phi^+=\tfrac{1}{d}\sum_{a,b=1}^d\ket{aa}\bra{bb}$ is a maximally entangled state and $d$ is the dimension of the subsystems $\mathcal{X}$ and $\mathcal{Y}$.
    It was shown in Refs. \cite{B02,APB07} that $\sigma$ is entangled but local if
    \begin{equation}
        \label{eq:local}
        \frac{1}{1+d} < p\leq \frac{(3d-1)(d-1)^{d-1}}{(d+1)d^{d}}.
    \end{equation}
    For $p\leq 1/(1+d)$ the state $\sigma$ is separable.
    Since, $\sigma$ is a free (i.e., local) state we have $\Delta(\ket{m_\sigma}\bra{m_\sigma}\otimes \sigma)=\sigma$, due to relation (iv).
    If each of the senders $A_1,\dots,A_N$ individually prepares the free state $\sigma$ together with the correct description $m_\sigma$,
    the state obtained by the receivers $B_1,\dots,B_N$ is simply a product state
    \begin{equation}
        \label{eq:nonloc}
        \Delta\big(\ket{m_{\sigma}}\bra{m_{\sigma}}\otimes\sigma\big)^{\otimes N}\\
            =\left(p\phi^++(1-p)\frac{\mathbb{1}^{\otimes 2}}{d^2}\right)^{\otimes N}.
    \end{equation}
    For $p$ satisfying the upper bound in Eq. \eqref{eq:local} and a large number of users $N$, the state \eqref{eq:nonloc} is nonlocal; see Ref. \cite{P12} for a proof.
    The censorship has been broken.

    It is of course possible to establish a stricter censorship that is unbreakable, e.g., a censorship of entanglement or a censorship of coherence. 
    This, however, comes at the cost of certain local states to be prohibited from transmission as well.

    \section{Conclusion}
    \label{sec:FIN}

    In this paper, we introduced a protocol for quantum censorship. 
    Therein, an agent applies conditional RD channels locally to each sender-receiver pair. 
    This prohibits the distribution of resource states in a public-domain network. 
    By using conditional channels, the protocol can censor a much wider variety of resources than any RD channel. 
    We saw this explicitly for the QRT of imaginarity and entanglement.
    For both resources, no RD channel exists, but conditional RD channels establish an unbreakable censorship.
    On the other hand, we found that the protocol fails to censor discord and nonlocality.
    The advantage conditional RD channels offer, comes at the cost of senders providing a classical description of their state to be transmitted.
    In general, such a cost amounts to an exponential overhead in classical resources, which is the reason why in general nonlinear quantum operations are unfeasible in many settings.
    However, in the censorship protocol users are only permitted to transmit free states.
    Thus, their state description can often be given without imposing high computational expense on the side of the users.
    Moreover, our protocol avoids tomography procedures which are unreliable when senders are uncooperative.
    In addition, the protocols do not rely on multiple copies of quantum states to perform nonlinear quantum operations, which would make censorship extremely costly. 
    
    Developing quantum censorship protocols becomes especially relevant once we are confronted with the emergence of a widely accessible quantum internet. 
    On the one hand, quantum censorship allows the prevention of malicious parties from quantum-cryptographic attacks in places where post-quantum cryptography is not at the state-of-the-art. 
    On the other hand, recent experimental progress forecasts a possible usage of the preexisting telecommunication networks for the transmission of quantum information \cite{ZCB21,LBF22,JX23}.
    In such a setting, commercial enterprises may offer free classical services but want to charge premium fees for quantum communication.
    The censorship protocol would enable such policies.
    For a quantum internet, in which quantum resources are ubiquitous, the control and regulation of quantum communication might become a primary concern in quantum information science.
    We hope this work paves the way for quantum censorship as an emerging topic in quantum information theory.

    \acknowledgments
    
    We gratefully acknowledge financial support from Denmarks Grundforskningsfond (DNRF 139, Hy-Q Center for Hybrid Quantum Networks) and the Alexander von Humboldt Foundation (Feodor Lynen Research Fellowship).

    \appendix 

    \section{Representation of conditional RD channels}
    \label{app:rep}
    Here, we prove that the representation in Eq. \eqref{eq:CRD-Rep} is both necessary and sufficient for having a conditional RD channel.
    \begin{lemma}
        \label{lem:CRD}
        A quantum channel $\Delta:\mathcal{D}(MA)\to\mathcal{D}(B)$ is a conditional RD channel, if and only if, it can be written as 
        \begin{equation}
        \label{eq:app-1}
        \Delta=\sum_{m_\sigma}\braket{m_\sigma|(\,\cdot\,)|m_\sigma}\otimes \Delta_{m_{\sigma}},
    \end{equation}
    where $\{\ket{m_{\sigma}}\}_{m_{\sigma}}$ is an orthonormal basis and $\Delta_{m_{\sigma}}$ obeys 
    \begin{align*}
        \text{(v)}\quad\forall\rho\in\mathcal D(A):& \quad\Delta_{m_\sigma}(\rho)\in\mathcal{F}(B),
        \\
        \text{(vi)}\quad\forall\sigma\in\mathcal F(A):& \quad\Delta_{m_{\sigma}}(\sigma)=\sigma,
    \end{align*}
    where $\mathcal{F}$ is the (convex) set of free states.
    \end{lemma}
    \begin{proof}
        First, we show sufficiency, i.e., Eq. \eqref{eq:app-1} satisfies Definition \ref{def:CRD}.
        Expanding a state $\rho\in\mathcal{D}(MA)$ as in Eq. \eqref{eq:exp} yields 
        \begin{equation}
        \label{eq:app-2}
            \begin{split}
                \Delta(\rho)&=\sum_{a,b,c}t_{bc}\braket{m_{\sigma^a}|m_{\sigma^b}}\braket{m_{\sigma^c}|m_{\sigma^a}}\Delta_{m_{\sigma^a}}(\rho^{bc}),\\
                &=\sum_{a} t_{aa} \Delta_{m_{\sigma^a}}(\rho^{aa}),\\
            \end{split}
        \end{equation}
        where we used orthogonality, i.e., $\braket{m_{\sigma^a}|m_{\sigma^b}}=\delta_{ab}$.
        It follows from (v), together with $t_{aa}\geq 0$, and $\mathcal{F}(B)$ being convex, that
        \begin{equation}
            \Delta(\rho)= \sum_{a} t_{aa}\Delta_{m_{\sigma^a}}(\rho^{aa})\in\mathcal{F}(B)
        \end{equation}
        is a free state. 
        Thus, property (iii) of a conditional RD channel is verified.
        Similarly, condition (vi) yields
        \begin{equation}
            \begin{split}
                \Delta(\ket{m_\sigma}\bra{m_\sigma}\otimes\sigma)&=\Delta_{m_\sigma}(\sigma),\\
                &=\sigma,
            \end{split}
        \end{equation}
        for any free state $\sigma\in\mathcal{F}(A)$.
        Thus, property (iv) is verified and $\Delta$ is shown to be a conditional RD channel.

        To prove necessity, suppose $\Delta$ is a conditional RD channel; see Definition \ref{def:CRD}.
        Any channel $\Delta:\mathcal{D}(MA)\to\mathcal{D}(B)$ can be written as 
        $\Delta=\sum_a \Gamma_a\otimes \Delta_a$, where $\Gamma_a:\mathcal{D}(M)\to\mathbb{C}$ and $\Delta_a:\mathcal{D}(A)\to\mathcal{D}(B)$ are completely positive maps.
        In particular, $\Gamma_a(\rho)=\sum_b\braket{\psi_{ab}|\rho|\psi_{ab}}$ for some (not necessarily normalized) $\ket{\psi_{ab}}\in\mathcal{H}_M$.
        Because $\Delta$ is, by assumption, a conditional RD channel, property (iv) holds.
        This implies 
        \begin{equation*}
            \begin{split}
                \forall\sigma\in\mathcal{F}(A):~ \Delta(\ket{m_\sigma}\bra{m_\sigma}\otimes\sigma)&=\sum_{a,b}|\braket{\psi_{ab}|m_{\sigma}}|^2\Delta_a(\sigma),\\
                &=\sigma.\\
            \end{split}
        \end{equation*}
        Because the above equality holds for an orthonormal basis $\{\ket{m_\sigma}\}_{m_\sigma}$, it follows that $\forall b:\ket{\psi_{ab}}\propto\ket{m_{\sigma^a}}$.  
        Then, we must have that $\Delta_{m_\sigma}(\sigma)=\sigma$ for all $\sigma\in\mathcal{F}(A)$, which implies condition (vi).
        It remains to show that condition (v) holds as well.
        To do so, let $\rho\in\mathcal{D}(MA)$ be an arbitrary quantum state.
        Using the expansion \eqref{eq:exp}, it follows that $\Delta(\rho)=\sum_{a} t_{aa} \Delta_{m_{\sigma^a}}(\rho^{aa})$.
        Since, $\Delta$ satisfies property (iii), i.e., $\Delta(\rho)\in\mathcal{F}(B)$ for any $\rho\in\mathcal{D}(MA)$, we must have that $\Delta_{m_{\sigma^a}}(\rho^{aa})\in\mathcal{F}(B)$ is a free state as well, for any $\rho^{aa}\in\mathcal{D}(A)$.
        Thus, condition (v) is necessary as well, concluding the proof.
    \end{proof}
    
    \section{Purity is non-increasing under unital channels}
    \label{app:pur}
    In this part of the appendix we show that the purity of a quantum state is non-increasing under unital channels.
    \begin{lemma}
        \label{lem:pur}
        Let $\rho\in\mathcal{D}(\mathcal{H})$ and let $\Lambda:\mathcal{D}(\mathcal{H})\to \mathcal{D}(\mathcal{H})$ be a unital channel.
        It holds that $\mathrm{Tr}[\Lambda(\rho)^2]\leq \mathrm{Tr}(\rho^2)$.
    \end{lemma}
    For concreteness, let $\mathcal{H}$ be a $d$-dimensional Hilbert space.
    A quantum channel $\Lambda$ is unital if $\Lambda(\mathbb{1}/d)=\mathbb{1}/d$.
    For proving Lemma \ref{lem:pur}, it is useful to review the notion of majorization \cite{W18}. 
    A quantum state $\rho$ majorizes a quantum state $\sigma$ if
    \begin{equation}
        \sum_{a=1}^k \lambda_a(\sigma)\leq\sum_{a=1}^k\lambda_a(\rho),
    \end{equation}
    for all $k=1,\dots,d$.
    Here, $\lambda_a(\rho)$ denotes the $a$th ordered eigenvalue of $\rho$, i.e., $\lambda_1\geq \dots\geq \lambda_d$.
    If $\rho$ majorizes $\sigma$, we write $\sigma\preceq \rho$.
    We now proceed with the proof of Lemma \ref{lem:pur}.
    \begin{proof}
        First, note that the purity of $\rho$ is given by $\mathrm{Tr}(\rho^2)=\sum_{a=1}^d\lambda_a(\rho)^2$. 
        Furthermore $\sigma\preceq\rho$ implies $\sum_{a=1}^d \lambda_a(\sigma)^2\leq\sum_{a=1}^d\lambda_a(\rho)^2$.
        It follows that $\mathrm{Tr}(\sigma^2)\leq \mathrm{Tr}(\rho^2)$.
        Moreover, it holds that $\sigma\preceq \rho$, if and only if, $\sigma=\Lambda(\rho)$ for some unital channel $\Lambda$ \cite{W18}, a result originally due to Uhlmann \cite{U72}.
        This implies that for any unital channel $\Lambda$ we have $\Lambda(\rho)\preceq \rho$.
        Thus, we arrive at the desired result $\mathrm{Tr}[\Lambda(\rho)^2]\leq \mathrm{Tr}(\rho^2)$.
    \end{proof}

    Finally, we note that the dephasing channel 
    \begin{equation}
        \Delta(\rho)=\sum_{a=1}^d \ket{\sigma^a}\bra{\sigma^a}\rho\ket{\sigma^a}\bra{\sigma^a},
    \end{equation}
    with an orthonormal basis $\{\ket{\sigma^a}\}_{a}$, is unital. 
    That is $\Delta(\mathbb{1}/d)=\tfrac{1}{d}\sum_a\ket{\sigma^a}\bra{\sigma^a}=\mathbb{1}/d$.
    Applying Lemma \ref{lem:pur}, we arrive at $\mathrm{Tr}[\Delta(\rho)^2]\leq \mathrm{Tr}(\rho^2)$, which justifies Eq. \eqref{eq:non-inc} in Sec. \ref{sec:GQC}.

\end{document}